\renewenvironment{proof}{\begin{pf}}{\qed\end{pf}}
\newcommand\restr[2]{{
  \left.\kern-\nulldelimiterspace 
  #1 
  \right|_{#2} 
  }}
\newcommand*{\defeq}{\mathrel{\vcenter{\baselineskip0.5ex \lineskiplimit0pt
                     \hbox{\scriptsize.}\hbox{\scriptsize.}}}%
                     =}     
\newcommand{\funDom}{\ensuremath{\mathbf{dom}}}
\newcommand{\FFF}{\ensuremath{\Phi}}
\newcommand{\qprefix}{\mathcal{Q}}
\newcommand{\matrixf}{\varphi}
\newcommand{\var}{\mathit{var}}
\newcommand{\tool}{{\sc Pedant}}%
\algnewcommand{\LineComment}[1]{\State \(\triangleright\) \parbox[t]{0.8\linewidth}{#1}}
\begin{document}
\title{Certified DQBF Solving by Definition Extraction}
\author{Franz-Xaver Reichl \and Friedrich Slivovsky \and Stefan Szeider}
\authorrunning{F. Reichl et al.}
\institute{
TU Wien, Vienna, Austria\\
\email{\{freichl,fs,sz\}@ac.tuwien.ac.at}
}
\maketitle              
\begin{abstract}
  We propose a new decision procedure for dependency quantified Boolean formulas (DQBFs) that uses interpolation-based definition extraction to compute Skolem functions in a counter-example guided inductive synthesis (CEGIS) loop.
In each iteration, a family of candidate Skolem functions is tested for correctness using a SAT solver, which either determines that a model has been found, or returns an assignment of the universal variables as a counterexample.
Fixing a counterexample generally involves changing candidates of multiple existential variables with incomparable dependency sets.
 Our procedure introduces auxiliary variables---which we call
 \emph{arbiter variables}---that each represent
 the value of an
 existential variable for a particular assignment of its dependency
 set.
  Possible repairs are expressed as clauses on these variables, and a SAT solver is invoked to find an assignment that deals with all previously seen counterexamples.
Arbiter variables define the values of Skolem functions for assignments where they were previously undefined, and may lead to the detection of further Skolem functions by definition extraction.

A key feature of the proposed procedure is that it is certifying by design:
for true DQBF, models can be returned at minimal overhead.
Towards certification of false formulas, we prove that clauses can be derived in an expansion-based proof system for DQBF.
 
In an experimental evaluation on standard benchmark sets, a prototype implementation was able to match (and in some cases, surpass) the performance of state-of-the-art-solvers.
  Moreover, models could be extracted and validated for all true instances that were solved.%
\end{abstract}
%
\section{Introduction}
Sustained progress in propositional satisfiability (SAT) solving~\cite{HeuleJS19} has resulted in a growing number of applications in the area of electronic design automation~\cite{VizelWM15}, such as model checking~\cite{BiereCCZ99}, synthesis~\cite{Solar-LezamaTBSS06}, and symbolic execution~\cite{BaldoniCDDF18}.
Efficient SAT solvers were essential for recent progress in constrained sampling and counting~\cite{MeelVCFSFIM16}, two problems with many applications in artificial intelligence.
In these cases, SAT solvers are used to deal with problems from complexity classes beyond NP and propositional encodings that grow super-polynomially in the size of the original instances.
As a consequence, these problems are not directly encoded in propositional logic but have to be reduced to a sequence of SAT instances.

The success of SAT solving on the one hand, and the inability of propositional logic to succinctly encode problems of interest on the other hand, have prompted the development of decision procedures for more succinct generalizations of propositional logic such as Quantified Boolean Formulas (QBFs).
Evaluating QBFs is PSPACE-complete~\cite{StockmeyerM73} and thus believed to be much harder than SAT, but in practice the benefits of a smaller encoding may outweigh the disadvantage of slower decision procedures~\cite{FaymonvilleFRT17}. A QBF is true if it has a \emph{model}, which is a family of Boolean functions (often called \emph{Skolem functions}) that satisfy the matrix of the input formula for each assignment of universal variables.
The arguments of each Skolem function are implicitly determined by the nesting of existential and universal quantifiers.
Dependency QBF (DQBF) explicitly state a \emph{dependency set} for each existential variable, which is a subset of universal variables allowed as arguments of the corresponding Skolem function~\cite{AzharPR01,BalabanovCJ14}.
As such, they can succinctly encode the existence of Boolean functions subject to a set of constraints~\cite{Rabe17}, and problems like equivalence checking of partial circuit designs~\cite{GitinaRSWSB13} and bounded synthesis~\cite{FaymonvilleFRT17} can be naturally expressed in this way.

Several decision procedures for DQBF have been developed in recent years (see Section~\ref{sec:related}).
Conceptually, these solvers either reduce to SAT or QBF by instantiating~\cite{FrohlichKBV14} or eliminating universal variables~\cite{GitinaWRSSB15,WimmerKBS17,Ge-ErnstSW19,Sic20}, or lift Conflict-Driven Clause Learning (CDCL) to non-linear quantifier prefixes by imposing additional constraints~\cite{FrohlichKB12,TentrupR19}.\footnote{An approach that does not fit this simplified classification is the First-Order solver {\sc iProver}~\cite{Korovin08}.}
We believe these methods should be complemented with algorithms that directly reason at the level of Skolem functions~\cite{RabeS16}.
A strong argument in favor of such an approach is the fact that DQBF instances often have a large fraction of unique Skolem functions that can be obtained by definition extraction, but the current solving paradigms have no direct way of exploiting this~\cite{Slivovsky20}.

In this paper, we develop new decision procedures for DQBF designed around computing Skolem functions by definition extraction.
We first describe a simple algorithm that proceeds in two phases. In the first phase, it introduces clauses to make sure each existential variable is defined in terms of its dependency set and auxiliary \emph{arbiter variables}. In the second phase, it searches for an assignment of the arbiter variables under which the definitions are a model.
Runs of this algorithm can degenerate into an exhaustive instantiation of dependency sets for easy cases, so we propose an improved version in the Counter-Example Guided Inductive Synthesis (CEGIS) paradigm~\cite{Solar-LezamaTBSS06,Solar-LezamaJB08,JhaS17}.

We implemented the CEGIS algorithm in a system named \tool.
In an experimental evaluation, \tool\ performs very well compared to a selection of state-of-the-art solvers---notably, it achieves good performance without the aid of the powerful preprocessor {\sc HQSPre}~\cite{WimmerSB19}.
One of the benefits of its function-centric design is that \tool\ internally computes a family of Skolem functions, and can output models of true instances at a negligible overhead.
Using a simple workflow, we are able to validate models for all true instances solved by \tool.
Towards validation of false instances, we prove that clauses introduced by \tool\ can be derived in the $\forall$Exp+Res proof system~\cite{JanotaM13,BeyersdorffBCSS19}.

The remainder of the paper is structured as follows.
After covering basic concepts in Section~\ref{sec:preliminaries}, we present the new decision algorithms for DQBF and prove their correctness in Section~\ref{sec:algorithms}.
We describe the implementation and experimental results in Section~\ref{sec:experiments}.
We discuss related work in Section~\ref{sec:related}, before concluding with an outlook on future work in Section~\ref{sec:conclusion}.
\section{Preliminaries}\label{sec:preliminaries}
\paragraph{Propositional Logic} 
A \emph{literal} is either a variable or the negation of a variable. 
A \emph{clause} is the disjunction of literals. 
A \emph{term} is a conjunction of literals.
A formula is in \emph{Conjunctive Normal Form} (CNF) if it is a conjunction of clauses.
Whenever convenient, we identify a CNF with a set of clauses and clauses, respectively terms, with sets of literals.
We denote the set of variables occurring in a formula $\varphi$ by $\mathit{var}(\varphi)$.
We denote the truth value \emph{true} by \textsc{true} and \emph{false} by \textsc{false}.
An \emph{assignment} of a set $V$ of variables is a function mapping $V$ to $\{\textsc{true},\textsc{false}\}$. 
We denote the set of all assignments for $V$ by $[V]$.
Moreover, we associate an assignment $\sigma$ with the term 
$\{x\mid x\in\funDom(\sigma),\sigma(x)=\textsc{true}\}\cup\{\neg x\mid x\in\funDom(\sigma),\sigma(x)=\textsc{false}\}$. 
Whenever convenient, we treat assignments as terms. 
Let $\sigma\in[V]$ and let $W\subseteq V$, then we denote the \emph{restriction of $\sigma$ to $W$} by $\restr{\sigma}{W}$.
For a formula $\varphi$ and an assignment $\sigma$ we denote the
\emph{evaluation of $\varphi$ by $\sigma$} with $\varphi[\sigma]$. 
A formula $\varphi$ is \emph{satisfied} by an assignment $\sigma$ if $\varphi[\sigma]=\textsc{true}$ and it is \emph{falsified} by $\sigma$ otherwise.
A formula $\varphi$ is \emph{satisfiable} if there is an assignment $\sigma$ that satisfies $\varphi$ and it
is \emph{unsatisfiable} otherwise.
Let $\varphi$ and $\psi$ be two formulae, $\varphi$ \emph{entails} $\psi$, denoted by $\varphi\vDash\psi$, if every
assignment satisfying $\varphi$ also satisfies $\psi$.
A \emph{definition} for a variable $x$ by a set of variables $X$ in a formula $\varphi$ is a formula $\psi$ with $\mathit{var}(\psi)\subseteq X$ such that for every satisfying assignment $\sigma$ of $\varphi$ the equality $\sigma(x)=\psi[\sigma]$ holds~\cite{Slivovsky20}.

\paragraph{Dependency Quantified Boolean formulas}
We only consider \emph{Dependency Quantified Boolean formulas} (DQBF) in \emph{Prenex Conjunctive Normal Form}
(PCNF). A DQBF in PCNF is denoted by $\Phi=\qprefix.\varphi$, where
\begin {enumerate*} [label=(\itshape\alph*\upshape)]
	\item the \emph{quantifier prefix} $\qprefix$ is given by
	$\qprefix=\forall u_1\ldots\forall u_n\exists e_1(D_1)\ldots\exists e_m(D_m)$. 
	Here $u_1,\ldots,u_n$ and $e_1,\ldots,e_m$ shall be pairwise different variables.
	We denote the set $\{u_1,\ldots,u_n\}$ by $U_{\Phi}$ and the set $\{e_1,\ldots,e_m\}$ by $E_{\Phi}$.
	Additionally, $D_1,\ldots,D_m$ shall be subsets of $U_{\Phi}$.
	\item the \emph{matrix} $\matrixf$ shall be a CNF with 
	$\mathit{var}(\matrixf)\subseteq U_\Phi\cup E_\Phi$.
\end {enumerate*} 
For $1\leq i\leq m$ we call the set $D_i$ the \emph{dependencies} of $e_i$.
We refer to the variables in $U_\Phi$ as \emph{universal variables} 
and to the variables in $E_\Phi$ as \emph{existential variables}.
For an existential variable $e$ we denote its dependencies by $D_\Phi(e)$.
If the underlying DQBF is clear from the context we omit the subscript.

Let $\Phi$ be a DQBF and $F$ be a set of functions $\{f_{e_1},\ldots,f_{e_m}\}$ such that for $1\leq i\leq m$,
$f_{e_i}: [D_i]\rightarrow\{\textsc{true},\textsc{false}\}$. 
For an assignment $\sigma$ to the universal variables we denote the existential assignment 
$\{f_{e_1}(\restr{\sigma}{D_1}),\ldots,f_{e_m}(\restr{\sigma}{D_m})\}$ by $F(\sigma)$.
$F$ is a \emph{model} (or a \emph{winning $\exists$-strategy}) for $\Phi$ if for each
assignment $\sigma$ to the universal variables, the assignment $\sigma\cup F(\sigma)$ satisfies the matrix $\matrixf$. A DQBF is true if it has a model and false otherwise.

\paragraph{$\forall$Exp+Res}
The \emph{DQBF-$\forall$Exp+Res}~\cite{BeyersdorffBCSS19} calculus is a proof system for DQBF, which is based on the \emph{$\forall$Exp+Res} calculus for QBF. 
It instantiates the matrix of a DQBF with a universal assignment and uses propositional resolution on the instantiated clauses. 
This proof system is sound and refutationally complete~\cite{BeyersdorffBCSS19}.
Since we are interested in DQBF, we refer to DQBF-$\forall$Exp+Res simply as $\forall$Exp+Res.
The rules for the system are given in Fig.~\ref{ForallExp}.

\noindent\fbox{\begin{minipage}{\linewidth}
\begin{prooftree}
\RightLabel{(axiom)}
\AxiomC{}
\UnaryInfC{$\{\ell^{\restr{\sigma}{D(\var(\ell))}}\mid \ell\in C,\mathit{var}(\ell)\in E\}$}
\end{prooftree}
Where $C$ is a clause in the matrix of the DQBF and $\sigma$ is an assignment to the universal variables
that falsifies each universal literal in $C$.
Note that variables with a different annotation denote different variables.

The second rule is the propositional resolution rule.

\begin{prooftree}
\RightLabel{(resolution)}
\AxiomC{$C_1\cup\{x^\tau\}$}
\AxiomC{$C_2\cup\{\neg x^\tau\}$}
\BinaryInfC{$C_1\cup C_2$}
\end{prooftree}
Where $x$ is an existential variable, $\tau$ an assignment for the universal variables in $D(e)$ and where $C_1$ and $C_2$ are clauses.
\captionof{figure}{The rules of DQBF-$\forall$Exp+Res}
\label{ForallExp}
\end{minipage}}

\section{Solving DQBF by Definition Extraction}\label{sec:algorithms}
In this section, we describe two decision procedures for DQBF that leverage definition extraction.
We start with an algorithm (Algorithm~\ref{alg:BasicAlgorithm}) that is fairly simple but introduces some important concepts.
Because this algorithm leads to the equivalent of exhaustive expansion of universal variables on trivial examples, we then introduce a more sophisticated algorithm based on CEGIS (Algorithm~\ref{alg:AdvancedAlgorithm}).
We also sketch correctness proofs for both algorithms.

Throughout this section, we consider a fixed DQBF $\FFF \defeq \qprefix.\;\matrixf$ with quantifier prefix $\qprefix \defeq \forall u_1\ldots\forall u_n\allowbreak \exists e_1(D_1)\ldots\exists e_m(D_m)$.
\subsection{A Two-Phase Algorithm}\label{Section:BasicAlgorithm}
The algorithm proceeds in two phases. In the first phase ({\sc GenerateDefinitions}), it finds definitions $\psi_\mathit{Def}$ for all existential variables.
It maintains a set $A$ of auxiliary \emph{arbiter variables} whose semantics are encoded in a set $\varphi_A$ of \emph{arbiter clauses}, both of which are empty initially.
If a variable $e_i$ is defined in terms of its dependency set, the definition is computed using a SAT solver (line~\ref{alg1:getdefinition}) capable of generating interpolants~\cite{Slivovsky20}.
Otherwise, the SAT solver returns an assignment~$\xi$ of the dependency set $D_i$ and the arbiter variables $A$ for which the variable is not defined.
In particular, $e_i$ is not defined under the restriction $\sigma = \xi|_{D_i}$ to its dependency set.
The algorithm then introduces an arbiter variable $e_i^{\sigma}$ that determines the value of the Skolem function for $e_i$ under $\sigma$.
In subsequent iterations, we include these arbiter variables in the set of variables that can be used in a definition of $e_i$.
The newly introduced clauses ensure that $e_i$ and $e_i^{\sigma}$ take the same value under the assignment $\sigma$ (line~\ref{alg1:newarbiterclauses}), so that $e_i$ is defined by $e_i^{\sigma}$ and~$D_i$.
Since the number of assignments $\sigma$ of the dependency set is bounded, we will eventually find a definition of $e_i$ in terms of its dependency set $D_i$ and the arbiter variables~$A$.

In the second phase ({\sc FindArbiterAssignment}), a SAT solver (line~\ref{alg1:arbitersolver}) is used to find an assignment of the arbiter variables under which the definitions obtained in the first phase are a model.
Starting with an initial assignment~$\tau$, we use a SAT solver
to check whether the formula $\psi_\mathit{Def} \land \neg \varphi$ consisting of the definitions from the first phase and the negated matrix of the input DQBF is unsatisfiable under~$\tau$ (line~\ref{alg1:validitycheck}).
If that is the case, Algorithm~\ref{alg:BasicAlgorithm} returns {\sc true}.
Otherwise, the SAT solver returns an assignment $\sigma$ as a counterexample.
Since the existential variables are defined in $\varphi \land \varphi_A$ by the universal and arbiter variables, the formula $\varphi \land \varphi_A$ must be unsatisfiable under the assignment $\tau \land \sigma|_U$ consisting of the arbiter assignment and counterexample restricted to universal variables.
A \emph{core} $\rho$ of failed assumptions $\tau \land \sigma_U$ such that $\rho \models \neg (\varphi \land \varphi_A)$ is extracted using another SAT call.
The assignment $\rho|_A$ represents a concise reason for the failure of the arbiter assignment $\tau$, and its negation $\neg \rho|_A$ is added as a new clause to the SAT solver used to generate arbiter assignments, which is subsequently invoked to find a new arbiter assignment.

This process continues until a model is found or the SAT solver cannot find a new arbiter assignment, in which case the algorithm returns {\sc false}.

\begin{algorithm}[ht!]
\caption{Solving DQBF by  Definition Extraction}\label{alg:BasicAlgorithm}
\begin{algorithmic}[1]
	\vspace{0.5em}
	\Procedure{SolveByDefinitionExtraction}{$\FFF$}
	\State$(\varphi_A,A,\psi_\mathit{Def})\gets\Call{GenerateDefinitions}{\FFF}$
	\State\Return$\Call{FindArbiterAssignment}{\FFF,\varphi_A,A,\psi_\mathit{Def}}$
    \EndProcedure
%
%
\Statex
\Procedure{GenerateDefinitions}{$\FFF$}
	\LineComment{$\FFF=\forall u_1\ldots\forall u_n\exists e_1(D_1)\ldots\exists e_m(D_m).\matrixf$}
	
	\State $A\gets\emptyset,\;\psi_\mathit{Def}\gets\emptyset,\;\varphi_A\gets\emptyset$
        \LineComment{$A$: arbiter variables, $\psi_\mathit{Def}$: definitions, $\varphi_A$: arbiter clauses}
	\For{$i = 1,\dots,m$}
		\State $\mathit{isDefined},\xi \gets\textsc{isDefined}
			(e_i,A\cup D_i,\varphi\wedge\varphi_A)$
		\While{not $\mathit{isDefined}$}
			\State $\sigma\gets\restr{\xi}{D_i}$ \Comment{$e_i$ is not defined under $\xi \in [D_i \cup A]$}
			\State $A\gets A\cup\{e^{\sigma}_i\}$
			\State $\varphi_A\gets\varphi_A\wedge(e^{\sigma}_i\vee\neg\sigma\vee\neg e_i)\wedge	(\neg e^{\sigma}_i\vee\neg\sigma\vee e_i)$ \label{alg1:newarbiterclauses}
			\State $\mathit{isDefined},\xi\gets\textsc{isDefined}
			(e_i,A\cup D_i,\varphi\wedge\varphi_A)$
		\EndWhile
		\State $\psi_\mathit{Def}^i\gets\Call{getDefinition}{e_i,A\cup D_i,\varphi\wedge\varphi_A}$ \label{alg1:getdefinition}
		\State $\psi_\mathit{Def}\gets\psi_\mathit{Def}\wedge(e_i\leftrightarrow\psi_\mathit{Def}^i)$
	\EndFor
	\State\Return $(\varphi_A,A,\psi_\mathit{Def})$
\EndProcedure
%
\Statex
\Procedure{FindArbiterAssignment}{$\FFF,\varphi_A,A,\psi_\mathit{Def}$}
	\State $\tau\gets\bigwedge_{a\in A} a$
	\Comment initial assignment to the arbiter variables
	\State $\mathit{validitySolver}\gets\textsc{SatSolver}(\mathit{\psi_\mathit{Def}}\wedge\neg\matrixf)$
	\State $\mathit{arbiterSolver}\gets\textsc{SatSolver}(\emptyset)$ \label{alg1:arbitersolver}
	\Loop
		\If{$\mathit{validitySolver}.\Call{solve}{\tau}$} \label{alg1:validitycheck}
			\State $\sigma\gets\mathit{validitySolver}.\Call{getModel}$
				\State $\rho\gets\Call{getCore}{\varphi\wedge\varphi_A,\tau\wedge\restr{\sigma}{U}}$ \label{alg1:getCore}	
				\State $\mathit{arbiterSolver}.\Call{addClause}{\neg\restr{\rho}{A}}$ \label{alg1:addclause}	
				\If{$\mathit{arbiterSolver}.\textsc{solve}()$}
					\State$\tau\gets\mathit{arbiterSolver}.\Call{getModel}$
				\Else
					\State\Return {\sc false}
				\EndIf	
		\Else
			\State \Return {\sc true}
		\EndIf		
	\EndLoop
\EndProcedure
\end{algorithmic}
\end{algorithm}
We now argue that Algorithm~\ref{alg:BasicAlgorithm} is a decision procedure for DQBF.
In the following, $A$ shall denote a set of arbiter variables and $\varphi_A$ shall denote the associated set of arbiter clauses.
We will first give two auxiliary properties of definitions, which we will use in subsequent proofs.
\begin{lemma}\label{DefinitionsArePreserved}
Let $\varphi$ and $\xi$ be two propositional formulas, let $e\in\var(\varphi)$ and $S\subseteq\var(\varphi)$.
Moreover, let $\psi_\mathit{Def}$ be a definition for $e$ by $S$ in $\varphi$. Then $\psi_\mathit{Def}$ is also a definition for $e$ by $S$ in $\varphi\wedge\xi$.
\end{lemma}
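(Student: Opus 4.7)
The statement is essentially a monotonicity property: adding conjuncts to $\varphi$ can only restrict the set of satisfying assignments, and on any surviving assignment the definitional equality $\sigma(e) = \psi_{\mathit{Def}}[\sigma]$ still holds because it was already forced by $\varphi$ alone.

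The plan is therefore to unfold the definition of ``definition'' from the Preliminaries and verify its two clauses for $\varphi \wedge \xi$. First I would record that the variable-scope condition $\var(\psi_{\mathit{Def}}) \subseteq S$ is inherited immediately, since $\psi_{\mathit{Def}}$ is literally the same formula in both cases; the set $S$ is assumed to be a subset of $\var(\varphi)$, and nothing about the variables of $\xi$ interferes with this. Next I would take an arbitrary assignment $\sigma$ that satisfies $\varphi \wedge \xi$, possibly after restricting to $\var(\varphi \wedge \xi)$, and observe that its restriction to $\var(\varphi)$ satisfies $\varphi$ because conjunction is monotone with respect to satisfaction. By the hypothesis that $\psi_{\mathit{Def}}$ defines $e$ by $S$ in $\varphi$, this restricted assignment satisfies $\sigma(e) = \psi_{\mathit{Def}}[\sigma]$. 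Since $\var(\psi_{\mathit{Def}}) \subseteq S \subseteq \var(\varphi)$, the value $\psi_{\mathit{Def}}[\sigma]$ is unaffected by any additional variables of $\xi$ that $\sigma$ may assign, so the equality lifts to $\sigma$ itself.

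The only real subtlety to be careful about is the domain of assignments: one has to make sure that evaluating $\psi_{\mathit{Def}}$ under $\sigma$ (viewed as an assignment to the larger variable set of $\varphi \wedge \xi$) agrees with evaluating it under $\restr{\sigma}{\var(\varphi)}$, and similarly that $\sigma(e)$ is well defined because $e \in \var(\varphi) \subseteq \var(\varphi \wedge \xi)$. Both points are immediate from $\var(\psi_{\mathit{Def}}) \subseteq S \subseteq \var(\varphi)$, so there is essentially no obstacle; the proof should be a short direct argument taking only a few lines.
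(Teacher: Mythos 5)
Your argument is correct and matches the paper's proof: both simply observe that every satisfying assignment of $\varphi\wedge\xi$ also satisfies $\varphi$, so the definitional equality $\sigma(e)=\psi_{\mathit{Def}}[\sigma]$ carries over unchanged. The extra care you take about restricting assignments to $\var(\varphi)$ is harmless but not needed beyond what the paper states.
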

\begin{proof}
Obviously any satisfying assignment $\sigma$ of $\varphi\wedge\xi$ also satisfies $\varphi$. 
By the premise of the lemma we know that for every satisfying assignment $\sigma$ of $\varphi$ we have
$\sigma(e)=\psi_\mathit{Def}[\sigma]$. By combining the above two properties we get that $\sigma(e)=\psi_\mathit{Def}[\sigma]$ holds for
every satisfying assignment for $\varphi\wedge\xi$. This proves the lemma.
\end{proof}

\begin{remark}
The above Lemma implies that if we once find a definition in Algorithm~\ref{alg:BasicAlgorithm} then we do not destroy it by adding additional clauses.
\end{remark}

\begin{lemma}\label{DefinitionRemoval}
Let $\psi$ be a formula and let $S\subseteq\var(\psi)$ such that each $s\in S$ has a definition $\psi_s$ by a set 
$D_s\subseteq\var(\psi)$ in $\psi$. Then $\psi$ and $\psi\wedge\bigwedge_{v\in S}(v\leftrightarrow\psi_s)$ are equivalent.
\end{lemma}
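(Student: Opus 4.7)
The plan is to establish the equivalence by showing mutual entailment between $\psi$ and $\psi \wedge \bigwedge_{v \in S}(v \leftrightarrow \psi_v)$. One direction is immediate: since dropping a conjunct can only enlarge the set of satisfying assignments, every assignment satisfying $\psi \wedge \bigwedge_{v \in S}(v \leftrightarrow \psi_v)$ also satisfies $\psi$.

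For the non-trivial direction, I would pick an arbitrary satisfying assignment $\sigma$ of $\psi$ and argue that $\sigma$ satisfies each biconditional $v \leftrightarrow \psi_v$ for $v \in S$. By assumption, $\psi_v$ is a definition of $v$ by $D_v$ in $\psi$, which by the definition given in the Preliminaries means that for every satisfying assignment $\sigma$ of $\psi$ the equality $\sigma(v) = \psi_v[\sigma]$ holds. This is exactly the semantic condition for $\sigma$ to satisfy $v \leftrightarrow \psi_v$. Taking the conjunction over all $v \in S$ then yields $\sigma \models \bigwedge_{v \in S}(v \leftrightarrow \psi_v)$, and combined with $\sigma \models \psi$ we obtain $\sigma \models \psi \wedge \bigwedge_{v \in S}(v \leftrightarrow \psi_v)$.

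I do not anticipate any real obstacle: the lemma is essentially a direct lifting of the pointwise definitional property to a logical conjunction, and the only thing to be careful about is that the biconditionals behave independently under a fixed $\sigma$, so that their conjunction is satisfied iff each is individually. This is automatic from the compositional semantics of $\wedge$ and requires no additional argument.
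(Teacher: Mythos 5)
Your proposal is correct and follows exactly the paper's argument: the backward direction is immediate from dropping conjuncts, and the forward direction applies the defining property $\sigma(v)=\psi_v[\sigma]$ for each $v\in S$ to every satisfying assignment $\sigma$ of $\psi$, which is precisely the semantic content of $v\leftrightarrow\psi_v$. No gaps.
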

\begin{proof}
First assume that $\psi$ is satisfied by an assignment $\sigma$. By the nature of definition we have $\sigma(v)=\psi_v[\sigma]$ for each $v\in S$. Thus, $\sigma$ satisfies $\psi\wedge\bigwedge_{v\in S}(v\leftrightarrow\psi_s)$. The other direction of the equivalence is obvious.
\end{proof}

A DQBF has a model if, and only if, there is a propositional formula for each existential variable that defines its Skolem functions using only variables from the dependency set.
This can be slightly generalized by allowing the definition to contain existential variables whose dependency sets are a subset.
\begin{lemma}
\label{TrueDQBFAux}
Let $\FFF$ be a DQBF and $<_E$ a linear ordering of its existential variables.
Then $\FFF$ is true if, and only if, for each $e\in E$ there is a formula $\psi_e$ with $\var(\psi_e)\subseteq D(e)\cup\{x\in E\mid D(x)\subseteq D(e),x<_E e\}$ such that 
$\neg\varphi\wedge\bigwedge_{e\in E}(e\leftrightarrow\psi_e)$ is unsatisfiable. 
\end{lemma}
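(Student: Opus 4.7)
The plan is to prove each direction separately, with the forward direction being essentially straightforward and the reverse direction requiring an inductive substitution argument along the ordering $<_E$.

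For the forward direction, I would assume that $\FFF$ has a model $F=\{f_{e_1},\ldots,f_{e_m}\}$ and, for each existential $e$, take $\psi_e$ to be the propositional encoding of the Skolem function $f_e$ using only the variables in $D(e)$. This automatically satisfies the variable-scope condition since $\var(\psi_e)\subseteq D(e)$, which is trivially a subset of $D(e)\cup\{x\in E\mid D(x)\subseteq D(e),x<_E e\}$. To show $\neg\varphi\wedge\bigwedge_{e\in E}(e\leftrightarrow\psi_e)$ is unsatisfiable, I would argue by contradiction: a satisfying assignment $\sigma$ would induce a universal assignment $\restr{\sigma}{U}$, and the biconditionals would force $\sigma(e)=f_e(\restr{\sigma}{D(e)})$ for every $e$; hence $\sigma=\restr{\sigma}{U}\cup F(\restr{\sigma}{U})$, which satisfies $\varphi$ by the model property, contradicting $\neg\varphi[\sigma]=\textsc{true}$.

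For the reverse direction, I would extract Skolem functions from the formulas $\psi_e$ by induction on the linear order $<_E$. For the $<_E$-minimal existential $e$, the set $\{x\in E\mid D(x)\subseteq D(e),x<_E e\}$ is empty, so $\psi_e$ already has variables only in $D(e)$, and I can directly read off $f_e:[D(e)]\to\{\textsc{true},\textsc{false}\}$ as the Boolean function computed by $\psi_e$. For the inductive step, suppose $f_{e'}$ has been defined for every $e'<_E e$, and let $\psi'_e$ be the formula obtained from $\psi_e$ by substituting, for each existential variable $e'\in\var(\psi_e)$, the formula $\psi'_{e'}$ already constructed; the dependency condition $D(e')\subseteq D(e)$ guarantees $\var(\psi'_e)\subseteq D(e)$, so $\psi'_e$ defines a function $f_e:[D(e)]\to\{\textsc{true},\textsc{false}\}$ with the correct dependency set.

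To see that $F=\{f_{e_1},\ldots,f_{e_m}\}$ is a model, I would fix an arbitrary universal assignment $\sigma$ and consider the total assignment $\sigma\cup F(\sigma)$. A straightforward induction on $<_E$ shows that this assignment satisfies every biconditional $e\leftrightarrow\psi_e$: indeed, unwinding the substitutions used to obtain each $\psi'_e$ shows that the value $f_e(\restr{\sigma}{D(e)})$ agrees with $\psi_e$ evaluated on $\sigma\cup F(\sigma)$. Since $\neg\varphi\wedge\bigwedge_{e\in E}(e\leftrightarrow\psi_e)$ is unsatisfiable by hypothesis, the assignment $\sigma\cup F(\sigma)$ must satisfy $\varphi$, which is exactly the model condition.

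The main obstacle is the bookkeeping in the inductive substitution: one must verify both that the variable-scope of $\psi'_e$ stays within $D(e)$ (which uses the $D(e')\subseteq D(e)$ clause of the hypothesis crucially) and that the functional value agrees with the original $\psi_e$ when evaluated on a consistent extension. Neither step is deep, but stating the induction hypothesis cleanly — namely, that for every $e'<_E e$, the constructed $f_{e'}$ satisfies $f_{e'}(\restr{\sigma}{D(e')}) = \psi_{e'}[\sigma\cup\{f_{e''}(\restr{\sigma}{D(e'')}):e''<_E e'\}]$ — is the only subtle point in an otherwise routine argument.
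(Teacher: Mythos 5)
Your proposal is correct and follows essentially the same route as the paper: the forward direction encodes the Skolem functions as formulas over their dependency sets and derives a contradiction from a satisfying assignment of $\neg\varphi\wedge\bigwedge_{e\in E}(e\leftrightarrow\psi_e)$, and the reverse direction builds the model by induction along $<_E$, eliminating existential variables from each $\psi_e$ via the already-constructed functions (the paper substitutes function values $\rho(\sigma)$ where you substitute formulas, but this is only a cosmetic difference). The key points you identify—that $D(x)\subseteq D(e)$ ensures the substituted formula stays within $D(e)$, and that $\sigma\cup F(\sigma)$ satisfies all the biconditionals—are exactly the points the paper's proof relies on.
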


\begin{proof}
First we assume that $\Phi$ is true. This means that there is a model $F=\{f_1,\ldots f_m\}$. 
We can encode these functions by formulas $\psi_e$. 
Obviously we have for each $e\in E$ that $\var(\psi_e)\subseteq D(e)$---as the model function
for $e$ may only depend on $D(e)$.
Now assume that $\neg\varphi\wedge\bigwedge_{e\in E}(e\leftrightarrow\psi_e)$ is satisfiable, i.e.,\ there is an assignment $\sigma$ that satisfies the formula. But this implies that there is an assignment to the universal variables such that the matrix is falsified under the model. 
As this contradicts our initial assumption, we know that the formula is unsatisfiable.

Next we assume the other side of the equivalence and  show
how we can construct a model for $\Phi$.
For this purpose we distinguish between two cases. 
First let $e$ be an existential variable such that $\var(\psi_e)\subseteq D(e)$ and
let $\sigma$ be an assignment to $D(e)$. Then we define $f_e(\sigma)\defeq\psi_e[\sigma]$.
Now let $e$ be the smallest variable (with respect to $<_E$) in $E$ such that $\var(\psi_e)\nsubseteq D(e)$. 
By using the assumption it follows that for each variable
$x\in\var(\psi_e)\setminus D(e)$ we have $x<_E e$.
Because of the minimality of $e$, this means that for each $x$ in $\var(\psi_e)\setminus D(e)$
we have already defined a function $f_x$.
Now we define for $\sigma\in[D(e)]$ the assignment
$\rho(\sigma)\defeq\{(x,f_x(\restr{\sigma}{D(x)}))\mid x\in\var(\psi_e)\setminus D(e)\}.$
Finally, we define $f_e(\sigma)\defeq\psi_e[\rho(\sigma)]$---the construction guarantees that $\var(\psi_e[\rho(\sigma)])\subseteq D(e)$.

For the remaining variables we proceed inductively. We denote the resulting set of functions by $F$.
It remains to show that $F$ is actually a model. We assume the opposite, i.e., there is a universal assignment $\sigma$ such that $\matrixf$ is falsified by $\sigma\cup F(\sigma)$.
It can be proven that $\sigma\cup F(\sigma)$ satisfies $\bigwedge_{e\in E}(e\leftrightarrow\psi_e)$.
As $\sigma\cup F(\sigma)$ falsifies the matrix $\matrixf$ we get a contradiction to our initial assumption.

\end{proof}

\begin{theorem}\label{basicalg:cor}
If Algorithm~\ref{alg:BasicAlgorithm} returns {\sc true} for the DQBF $\Phi$ then $\Phi$ is true.
\end{theorem}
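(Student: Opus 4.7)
The plan is to reduce the conclusion to Lemma~\ref{TrueDQBFAux} via the terminal arbiter assignment. First, observe that Algorithm~\ref{alg:BasicAlgorithm} can return \textsc{true} only through the else-branch at the end of {\sc FindArbiterAssignment}: that is, only when the validity solver reports that $\psi_\mathit{Def} \wedge \neg \matrixf$ is unsatisfiable under some arbiter assignment $\tau$. I will use this $\tau$ to strip the arbiter variables out of each $\psi_\mathit{Def}^i$ and obtain Skolem-function candidates whose variables lie in the corresponding dependency sets.

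Concretely, set $\psi'_i \defeq \psi_\mathit{Def}^i[\tau]$ for each $i \in \{1,\dots,m\}$. The interpolation-based \textsc{getDefinition} call on line~\ref{alg1:getdefinition} guarantees $\var(\psi_\mathit{Def}^i) \subseteq A \cup D_i$, so that $\var(\psi'_i) \subseteq D_i = D(e_i)$. Because $\matrixf$ contains no arbiter variables, $\matrixf[\tau] = \matrixf$, and the reported unsatisfiability therefore specialises to the statement that $\neg \matrixf \wedge \bigwedge_{i=1}^{m}(e_i \leftrightarrow \psi'_i)$ is unsatisfiable. Lemma~\ref{TrueDQBFAux}, instantiated with any linear ordering $<_E$ of $E$, now applies: the containment $\var(\psi'_i) \subseteq D(e_i) \cup \{x \in E \mid D(x) \subseteq D(e_i),\; x <_E e_i\}$ it demands is immediate from the stronger $\var(\psi'_i) \subseteq D(e_i)$. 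The lemma then yields a model for $\FFF$, so $\FFF$ is true.

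The detail I expect to need care with is the growth of $A$ during {\sc GenerateDefinitions}: each $\psi_\mathit{Def}^i$ is produced against the arbiter set $A_i$ present at its own iteration, which may be strictly smaller than the final $A$ used to build $\tau$. This is harmless, since $\var(\psi_\mathit{Def}^i) \subseteq A_i \cup D_i \subseteq A \cup D_i$, so restricting by $\tau$ still eliminates every arbiter variable occurring in $\psi_\mathit{Def}^i$. Note also that the argument never relies on $\psi_\mathit{Def}^i$ being a definition in the formal sense---with respect to $\matrixf \wedge \varphi_A$ rather than to $\matrixf$ alone---only its variable signature and the final unsatisfiability check are used, so that relativisation causes no trouble.
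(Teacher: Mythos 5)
Your proof is correct and follows essentially the same route as the paper: both reduce the terminal unsatisfiability check $\neg\matrixf\wedge\psi_\mathit{Def}\wedge\tau$ to an application of Lemma~\ref{TrueDQBFAux}. The only (harmless) difference is that the paper keeps the arbiter variables around as existentials with empty dependency sets in an auxiliary DQBF $\FFF'$ and gives them the constant definitions $\tau(e^\sigma)$, whereas you substitute $\tau$ into the $\psi_\mathit{Def}^i$ up front so that the lemma applies to $\FFF$ directly; your closing observations (monotone growth of $A$, and that only the variable signature and the final UNSAT check are needed) are accurate.
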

\begin{proof}
Let $\FFF'\defeq\qprefix\exists A(\emptyset).\matrixf$, and let $<_E$ be any ordering of existential variables in~$\FFF'$ in which the variables in $A$ come before the remaining variables.
If Algorithm~\ref{alg:BasicAlgorithm} returns {\sc true},  we know that there is an arbiter assignment $\tau$ such that $\neg\matrixf\wedge\psi_\mathit{Def}\wedge\tau$ is unsatisfiable.
For each arbiter variable $e^\sigma$, we obtain a definition $\psi_e^\sigma$ as $\psi_e^\sigma \defeq \tau(e^\sigma)$. 
We can now replace the arbiter assignment $\tau$ with these definitions and apply Lemma~\ref{TrueDQBFAux} to conclude that $\FFF'$ is true.
But if $\FFF'$ is true, then necessarily also $\FFF$ is true.
\end{proof}
To show that the algorithm returns {\sc false} only if the input DQBF is false, one can prove that clauses on arbiter variables introduced by \Call{FindArbiterAssignment}{} can be derived (as clauses on annotated literals) in $\forall$Exp+Res.

\begin{definition}
Let $\tau$ be a (partial) assignment to the arbiter variables and let $\rho$ be an assignment to the universal variables. We define the term 
$\tau^\rho\defeq\{\ell^\sigma\in\tau\mid\rho\vDash\sigma\}$
\end{definition}
\begin{definition}
Let $\sigma$ be an assignment to the universal variables.
We denote the formula that is the result of instantiating $\matrixf$ by $\sigma$ in the sense of the $\forall$Exp+Res calculus by $\matrixf^\rho$. This means:
\[\matrixf^\sigma\defeq\{\{x^{\restr{\sigma}{D(\var(x))}}\mid x\in C,\var(x)\in E\}\mid C\in\matrixf,\forall \ell\in C:\var(\ell)\in U\Rightarrow\sigma\nvDash\ell\}\]
\end{definition}
\begin{remark}
Subsequently, we say that an arbiter clause $C=\ell^\sigma\vee\neg\sigma\vee\neg \ell$ 
is active with respect to an assignment $\rho$ if
$\rho$ falsifies $\ell^\sigma$ and $\rho$ satisfies $\sigma$.
\end{remark}%

\begin{lemma}\label{CorrespondanceArbiterAssignmentAndAnnotatedLiterals}
Let $\tau$ be a (partial) arbiter assignment and $\rho$ a universal assignment. Then $\matrixf\wedge\varphi_A\wedge\tau\wedge\rho$ and $\matrixf^\rho\wedge\tau^\rho$ are equisatisfiable.
\end{lemma}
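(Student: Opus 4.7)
The plan is to show equisatisfiability in both directions by transforming satisfying assignments, relying on one structural observation about $\varphi_A$: for an arbiter variable $e_i^\sigma$ with $\sigma \in [D_i]$, the pair of clauses $(e_i^\sigma \vee \neg\sigma \vee \neg e_i) \wedge (\neg e_i^\sigma \vee \neg\sigma \vee e_i)$ is vacuously true under $\rho$ when $\rho \nvDash \sigma$ and collapses to the biconditional $e_i^\sigma \leftrightarrow e_i$ when $\rho \vDash \sigma$. A companion observation ties the notation together: precisely when $\rho \vDash \sigma$ we have $\sigma = \restr{\rho}{D_i}$, so the annotated literal $e_i^\sigma$ appearing in $\tau^\rho$ denotes the same symbol as the annotated variable for $e_i$ with annotation $\restr{\rho}{D_i}$ that can occur in $\matrixf^\rho$.

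For the forward direction I would start from a satisfying assignment $\mu$ of $\matrixf \wedge \varphi_A \wedge \tau \wedge \rho$ and define a Boolean assignment $\nu^*$ on annotated existential variables by $\nu^*(e^{\restr{\rho}{D(e)}}) \defeq \mu(e)$. Each clause of $\matrixf$ either has a universal literal satisfied by $\rho$ (and so contributes no clause to $\matrixf^\rho$) or has some existential literal true under $\mu$, whose annotated counterpart is then true under $\nu^*$; and each literal $\ell^\sigma \in \tau^\rho$ is handled because the associated arbiter clauses are active under $\rho$ and force $\mu(e_i^\sigma) = \mu(e_i) = \nu^*(e_i^\sigma)$ via the symbol identification above.

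For the backward direction I would take a satisfying assignment $\nu^*$ of $\matrixf^\rho \wedge \tau^\rho$ and extend it to $\mu$ on the original variables: assign $\rho$ to the universals, $\nu^*(e^{\restr{\rho}{D(e)}})$ to each existential $e$, and for each arbiter variable $e_i^\sigma$ set $\mu(e_i^\sigma) \defeq \mu(e_i)$ if $\rho \vDash \sigma$, otherwise assign it the value $\tau$ prescribes (or anything if $\tau$ leaves it free). Consistency of these two clauses of the definition holds because in the case $\rho \vDash \sigma$ the identification gives $\nu^*(e_i^\sigma) = \mu(e_i)$, which already agrees with any value $\tau$ may prescribe (by $\nu^* \vDash \tau^\rho$). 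A short case split---on whether each matrix clause is discharged by a universal literal of $\rho$, and on whether each arbiter clause is inactive under $\rho$---then shows that $\mu$ satisfies $\matrixf \wedge \varphi_A \wedge \tau \wedge \rho$.

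I expect the main difficulty to be notational rather than conceptual: the \emph{arbiter variable} $e_i^\sigma$ is a fresh Boolean symbol introduced by \textsc{GenerateDefinitions}, while the homonymous \emph{annotated literal} $e_i^\sigma$ is an object of $\forall$Exp+Res, and one must keep the two strictly apart while simultaneously exploiting the fact that they coincide exactly in the case $\sigma = \restr{\rho}{D_i}$---which is also the only case in which either object contributes a nontrivial constraint.
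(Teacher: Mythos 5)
Your proposal is correct and follows essentially the same route as the paper's proof: both directions translate satisfying assignments via the identification of the arbiter variable $e_i^\sigma$ with the annotated variable $e_i^{\restr{\rho}{D_i}}$ exactly when $\rho\vDash\sigma$, together with the observation that the arbiter clauses are vacuous under $\rho$ when $\rho\nvDash\sigma$ and collapse to $e_i^\sigma\leftrightarrow e_i$ otherwise. Your explicit consistency check for arbiter variables that are both constrained by $\tau$ and active under $\rho$ is a slightly cleaner packaging of the paper's prioritized case definition of the extended assignment, but it is the same argument.
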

\begin{proof}
First we assume that $\varphi\wedge\varphi_A\wedge\tau\wedge\rho$ is satisfied by an assignment $\lambda$.
We now construct a satisfying assignment for $\matrixf^\rho\wedge\tau^\rho$.
We define an assignment $\mu$ as $\mu(e^{\restr{\rho}{D(e)}})\defeq\lambda(e)$ for $e\in E$.
It follows by the definition of $\matrixf^\rho$, respectively of $\mu$ that $\mu$ satisfies $\matrixf^\rho$.
Now let $\ell^{\sigma}\in\tau^\rho$ be arbitrary but fixed. 
We can conclude that there is an arbiter clause 
$\neg \ell^{\sigma}\vee\neg\sigma\vee \ell$ that is active with respect to $\tau\cup\rho$.
As the clause is active, we know that $\lambda\vDash\ell$. 
This means that $\mu\vDash\ell^\sigma$. As the literal was arbitrary, $\mu$ satisfies $\tau^\rho$

Now assume that $\varphi\wedge\varphi_A\wedge\tau\wedge\rho$ is unsatisfiable and show that
$\matrixf^\rho\wedge\tau^\rho$ is unsatisfiable. 
For this purpose we assume that $\matrixf^\rho\wedge\tau^\rho$  is satisfied by an assignment $\lambda$ and show that this implies that we can construct a satisfying assignment $\mu$ for $\varphi\wedge\varphi_A\wedge\tau\wedge\rho$.
We define $\mu$ as:
\[
	\mu(x)=
	\begin{cases}
		\rho(x)			& x\in U\\
		\lambda(x^{\sigma})	& x\in E \wedge x^{\sigma}\in\var(\varphi^\rho\wedge\tau^\rho)\\
		\tau(x)			& x\in\funDom(\tau)\\
		\textsc{true}			& \text{otherwise}
	\end{cases}
\]
This assignment is well-defined as for each $e\in E$ there is maximally one annotation.
Obviously $\mu$ satisfies $\varphi$ - any clause containing a universal literal in the same polarity
as in $\rho$ is trivially satisfied, the remaining clauses are satisfied as $\lambda$ satisfies
$\varphi^\rho$. We now show that $\mu$ also satisfies ${\varphi}_A$. 
For this purpose let $\ell^{\sigma}\vee\neg\sigma\vee\neg \ell$ be a clause in ${\varphi}_A$.
If $\mu$ satisfies $\ell^{\sigma}$ or $\neg\sigma$ the clause is satisfied, 
so assume the opposite---i.e.\ $\mu\vDash\neg \ell^{\sigma}$ and $\mu\vDash\sigma$.
This means that we have $\rho\vDash\sigma$. Now we have to differentiate between three cases:
If $\var(\ell^\sigma)\in\funDom(\tau)$ then we have $\tau\vDash\ell^\sigma$, which implies $\tau^\sigma\vDash\ell^\sigma$.
Thus, as $\lambda$ satisfies $\tau^\sigma$ we know that $\mu\vDash\neg \ell$.
If $\var(\ell^\sigma)\notin\funDom(\tau)$ and $\var(\ell^\sigma)\in\var(\varphi^\rho\wedge\tau^\rho)$ then $\mu(\ell)=\mu(\ell^\sigma)$.
If $\var(\ell^\sigma)\notin\funDom(\tau)$ and $\var(\ell^\sigma)\notin\var(\varphi^\rho\wedge\tau^\rho)$ we have $\mu(\var(\ell))=\mu(var(\ell^\sigma))=\textsc{true}$.
To sum up this means that $\mu$ satisfies the arbiter clause.
As the arbiter clause was arbitrary we thus know that $\mu$ satisfies $\varphi\wedge{\varphi}_A\wedge\tau\wedge\rho$. 
But this is a contradiction. Thus, we have shown that $\varphi^\rho\wedge\tau^\rho$ is unsatisfiable.
\end{proof}

\begin{remark}
Let $\pi$ be a refutation of a formula $\varphi$. Additionally, let $C_1,C_2$ and $C$ be clauses in $\pi$ such
that $C$ is the result of resolving $C_1$ and $C_2$ with respect to a pivot $x$. We denote the clause $C$ by
$\mathcal{R}_x(C_1,C_2)$.
\end{remark}

\begin{lemma}\label{Aux:PropositionalResult}
Let $\varphi$ be a formula and $L=\{\ell_1,\cdots,\ell_n\}$ be a set of literals with pairwise different variables.
If $\varphi\wedge L$ is unsatisfiable and for each $\hat{L}\subsetneq L$ the formula $\varphi\wedge\hat{L}$ is satisfiable then the clause $\neg \ell_1\vee\ldots\vee\neg \ell_n$ can be derived from $\varphi$ by resolution.
\end{lemma}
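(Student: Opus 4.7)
The plan is to prove this by combining refutational completeness of propositional resolution with an inductive ``lifting'' of a refutation into a derivation, and then use the minimality hypothesis to pin down exactly which literals survive.

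First I would invoke refutational completeness of resolution: since $\varphi \wedge \ell_1 \wedge \dots \wedge \ell_n$ is unsatisfiable, there is a resolution refutation $\pi$ from the clause set $\varphi \cup \SB \{\ell_1\}, \ldots, \{\ell_n\} \SE$, where each assumption $\ell_i$ is treated as a unit clause. The main construction is to transform $\pi$ into a resolution derivation $\pi'$ from $\varphi$ alone, associating to every clause $C$ occurring in $\pi$ a clause $\hat C$ derivable from $\varphi$ such that $\hat C \subseteq C \cup \{\neg \ell_1, \ldots, \neg \ell_n\}$. This would be done by induction on the structure of $\pi$: clauses from $\varphi$ are unchanged; a unit-clause leaf $\{\ell_i\}$ is ``removed'' and its effect recorded by letting any resolution step that used it as pivot be replaced by simply forwarding the other premise (which, under the invariant, already carries the literal $\neg \ell_i$); a resolution step between two lifted clauses either resolves again on the same pivot (if the pivot still occurs with opposite polarities) or simply picks whichever lifted premise already lacks the pivot. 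The empty clause at the root of $\pi$ then yields a clause $C^* \subseteq \{\neg \ell_1, \ldots, \neg \ell_n\}$ that is derivable from $\varphi$ by resolution.

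Next, I would apply the minimality condition. Suppose for contradiction that some $\neg \ell_i$ is missing from $C^*$. Then $C^* \subseteq \{\neg \ell_j \mid j \neq i\}$. By hypothesis, $\varphi \wedge (L \setminus \{\ell_i\})$ is satisfiable; any satisfying assignment makes every $\ell_j$ with $j \neq i$ true, hence falsifies every literal in $C^*$. Since $\varphi$ entails $C^*$ (as $C^*$ was derived by resolution from $\varphi$), this is a contradiction. Thus $C^* = \{\neg \ell_1, \ldots, \neg \ell_n\}$ exactly, and the resolution derivation of $C^*$ from $\varphi$ is the one claimed.

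The main obstacle I anticipate is making the ``lifting'' step in the construction rigorous, particularly the bookkeeping when the eliminated unit clause $\ell_i$ is used as a resolution pivot multiple times, and when a resolution step in $\pi$ would become degenerate in $\pi'$ (the pivot missing from a lifted premise). These cases have to be handled uniformly so that the invariant $\hat C \subseteq C \cup \{\neg \ell_1, \ldots, \neg \ell_n\}$ is preserved; beyond this the argument is routine, and the final contradiction from minimality is a short semantic step.
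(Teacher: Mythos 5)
Your proposal is correct and follows essentially the same route as the paper's proof: take a resolution refutation of $\varphi\wedge L$, lift it to a derivation from $\varphi$ alone in which the eliminated unit assumptions reappear as negated literals, and invoke the minimality of $L$ to conclude that the derived clause is exactly $\neg\ell_1\vee\ldots\vee\neg\ell_n$. The only differences are cosmetic: the paper first deletes clauses subsumed by $L$ so that every resolution on an $L$-pivot uses a unit assumption and then tracks which $\neg\ell_i$ to add via a reachability relation, whereas you keep $\varphi$ intact, carry the subset invariant $\hat C\subseteq C\cup\{\neg\ell_1,\ldots,\neg\ell_n\}$ through the transformation, and obtain exactness at the end by a short semantic argument from minimality.
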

\begin{proof}
Let $\hat{\varphi}$ be the clause which is the result of removing all clauses from $\varphi$ which are subsumed by a literal in $L$. 
Now we can conclude the following properties:
\begin{itemize}
	\item $\hat{\varphi}\wedge L$ is unsatifiable.
	\item For each $\hat{L}\subsetneq L$ the formula $\hat{\varphi}\wedge\hat{L}$ is satisfiable.
	\item Let $\ell\in L$ then $\ell$ does not occur in $\hat{\varphi}$.
\end{itemize}

As resolution is refutationally complete we can conclude with the above properties that there is a refutation $\pi=\pi_1,\ldots,\pi_m$ for $\hat{\varphi}\wedge L$. Next we define the relation $\rightarrow$.
For this purpose let $a\in\pi$ and $b\in\pi$ then we define $a\rightarrow b$ if
there is a clause $c\in\pi$ and a variable $x$ such that $b=\mathcal{R}_x(a,c)$. We denote the transitive closure of $\rightarrow$ by $\rightarrow^\ast$. 
Now we can see that for each $\ell\in L$ we have $\ell\rightarrow^\ast\pi_m$---otherwise we would get a contradiction to the minimality of $L$.

Subsequently, we construct a derivation $\pi'$ from $\pi$. For $1\leq i\leq m$ we
define $\pi_i'$ as:
\[\pi'_i\defeq\pi_i\cup\{\neg \ell\mid \ell\in L,\ell\rightarrow^\ast\pi_i\}\]
As the only possibility for a resolution with a pivot in $L$ are resolutions with unit clauses from $L$ we can see that each clause in $\pi'$ is either a clause in $\hat{\varphi}\wedge L$, 
the result of resolving on previous elements of $\pi'$ or the copy of a previous element of $\pi'$---%
note to simplify the work with indices we use copies of clauses.
Finally, we can see that $\pi'_m=\neg \ell_1\vee\ldots\vee\neg \ell_m$. 
\end{proof}

\begin{lemma}\label{FindDerivation}
Let $\tau$ be an arbiter assignment and $\rho$ a universal assignment. 
If $\varphi^\rho\wedge\tau^\rho$ is unsatisfiable then a subclause of $\neg\tau^\rho$ is derivable
from $\Phi$ in the $\forall$Exp+Res calculus.
\end{lemma}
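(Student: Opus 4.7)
The plan is to reduce the claim to the preceding auxiliary lemma about propositional resolution and then lift the resolution derivation to $\forall$Exp+Res by recognising that every clause in $\matrixf^\rho$ is immediately available as an axiom.

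First I would observe that by the definition of $\matrixf^\rho$, each of its clauses has the form $\{\ell^{\restr{\rho}{D(\var(\ell))}} \mid \ell \in C, \var(\ell) \in E\}$ for some $C \in \matrixf$ all of whose universal literals are falsified by $\rho$. But this is exactly the clause produced by the axiom rule of $\forall$Exp+Res applied to $C$ with the universal assignment $\rho$. Hence every clause of $\matrixf^\rho$ is derivable from $\FFF$ in a single axiom step.

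Next, since $\matrixf^\rho \wedge \tau^\rho$ is unsatisfiable, I would pick a subset $L \subseteq \tau^\rho$ that is minimal with respect to the property that $\matrixf^\rho \wedge L$ is unsatisfiable. The literals in $\tau^\rho$ come from the arbiter assignment $\tau$, and distinct entries of $\tau$ are literals on distinct arbiter variables, so the variables of the literals in $L$ are pairwise distinct. Therefore $L$ satisfies the hypothesis of Lemma~\ref{Aux:PropositionalResult}, which gives us a purely propositional resolution derivation of the clause $\bigvee_{\ell \in L}\neg\ell$ from $\matrixf^\rho$.

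Combining the two observations yields the result: prepend the axiom steps that produce the clauses of $\matrixf^\rho$ from $\FFF$, then attach the propositional resolution derivation from Lemma~\ref{Aux:PropositionalResult}. Since the resolution rule of $\forall$Exp+Res is just propositional resolution on annotated literals (with the arbiter literals in $\tau^\rho$ being of that form), the combined derivation is a legitimate $\forall$Exp+Res derivation from $\FFF$. Its conclusion $\bigvee_{\ell \in L}\neg \ell$ is a subclause of $\neg \tau^\rho$, as required. The only point that requires care is the minimality argument that feeds Lemma~\ref{Aux:PropositionalResult}; this is where I expect the bookkeeping (ensuring pairwise distinct variables in $L$ and that $\matrixf^\rho \wedge L'$ is satisfiable for every strict subset $L' \subsetneq L$) to be the main technical obstacle, but it follows directly from the definition of $\tau^\rho$ and the choice of $L$ as a minimal unsatisfiable core.
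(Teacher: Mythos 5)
Your proposal is correct and follows essentially the same route as the paper: extract a minimal unsatisfiable core of the arbiter literals, invoke Lemma~\ref{Aux:PropositionalResult} to obtain a propositional resolution derivation of its negation from $\matrixf^\rho$, and observe that the clauses of $\matrixf^\rho$ are exactly $\forall$Exp+Res axioms so the derivation lifts verbatim. Your explicit check that the literals in $L$ have pairwise distinct (arbiter) variables is a detail the paper's proof leaves implicit, but the argument is the same.
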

\begin{proof}
As $\varphi^\rho\wedge\tau^\rho$ is unsatisfiable there is a term $\hat{\tau}\subseteq\tau$ such that
$\varphi^\rho\wedge\hat{\tau}^\rho$ is unsatisfiable and such that for each term $t\subset\hat{\tau}$ the formula $\varphi^\rho\wedge\ t^\rho$ is satisfiable. 
This means we can apply Lemma~\ref{Aux:PropositionalResult}, which shows that $\neg\hat{\tau}$ is derivable by resolution from $\varphi^\rho$. As $\varphi^\rho$ is obtained by instantiating the matrix $\matrixf$ by the universal assignment $\rho$, we can derive $\neg\hat{\tau}$ in $\forall$Exp+Res from $\Phi$.
\end{proof}

Before we now use the above results to show that for each clause that is added to the arbiter solver we can derive a subclause, we will show that this clause is well-defined,

\begin{lemma}
If $\neg\matrixf\wedge\psi_{\mathit{Def}}\wedge\tau$ is satisfied by an assignment $\sigma$ then 
the formula $\matrixf\wedge\matrixf_A\wedge\tau\wedge\restr{\sigma}{U}$ is unsatisfiable.
\end{lemma}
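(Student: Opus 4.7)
The plan is to argue by contradiction: any purported satisfying assignment of $\matrixf \wedge \varphi_A \wedge \tau \wedge \restr{\sigma}{U}$ would be forced to agree with $\sigma$ on all variables in $\var(\matrixf)$, whence it would falsify $\matrixf$ rather than satisfy it.

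First I would observe that, by construction of \textsc{GenerateDefinitions} together with Lemma~\ref{DefinitionsArePreserved}, the formula $\psi_{\mathit{Def}}^i$ is a definition for $e_i$ by $D_i \cup A$ in $\matrixf \wedge \varphi_A$ for every $i$, even after later clauses have been added. Applying Lemma~\ref{DefinitionRemoval} then yields that $\matrixf \wedge \varphi_A$ is equivalent to $\matrixf \wedge \varphi_A \wedge \psi_{\mathit{Def}}$.

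Now suppose, for contradiction, that some assignment $\lambda$ satisfies $\matrixf \wedge \varphi_A \wedge \tau \wedge \restr{\sigma}{U}$. By the equivalence just established, $\lambda$ also satisfies $\psi_{\mathit{Def}}$, so $\lambda(e_i) = \psi_{\mathit{Def}}^i[\lambda]$ for every $i$. Since $\var(\psi_{\mathit{Def}}^i) \subseteq D_i \cup A \subseteq U \cup A$, the value $\psi_{\mathit{Def}}^i[\lambda]$ is determined by $\restr{\lambda}{U \cup A}$. But $\lambda$ agrees with $\sigma$ on $U$ (because $\lambda \vDash \restr{\sigma}{U}$) and on $A$ (because $\tau$ is a total assignment to $A$ that is satisfied by both $\lambda$ and $\sigma$). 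Consequently $\psi_{\mathit{Def}}^i[\lambda] = \psi_{\mathit{Def}}^i[\sigma]$, and since $\sigma$ satisfies $\psi_{\mathit{Def}}$ we also have $\psi_{\mathit{Def}}^i[\sigma] = \sigma(e_i)$. Therefore $\lambda$ and $\sigma$ coincide on $U \cup E \supseteq \var(\matrixf)$. Because $\sigma$ was assumed to satisfy $\neg\matrixf$, the assignment $\lambda$ must falsify $\matrixf$ as well—contradicting the assumption that $\lambda$ satisfies $\matrixf \wedge \varphi_A \wedge \tau \wedge \restr{\sigma}{U}$.

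The only delicate point is the claim that $\lambda$ and $\sigma$ agree on the arbiter variables $A$, which relies on $\tau$ being a total assignment to $A$; this is immediate from the algorithm, since $\tau$ is either the initial unit conjunction over all $a \in A$ or a model returned by the arbiter SAT solver (whose variables are exactly $A$). Apart from this bookkeeping, the argument is a direct combination of the two preparatory lemmas, so I do not expect any serious obstacle.
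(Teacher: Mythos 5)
Your proof is correct and follows essentially the same route as the paper's: both arguments combine Lemma~\ref{DefinitionRemoval} (to move $\psi_{\mathit{Def}}$ in or out of the conjunction with $\varphi\wedge\varphi_A$) with the observation that $\tau$, $\restr{\sigma}{U}$ and the definitions uniquely determine every variable, so any satisfying assignment would have to coincide with $\sigma$ and hence falsify $\matrixf$. You merely apply the two ingredients in the opposite order and spell out the ``unique determination'' step (including the totality of $\tau$ on $A$) more explicitly than the paper does.
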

\begin{proof}
If $\neg\matrixf\wedge\psi_{\mathit{Def}}\wedge\tau$ is satisfied by an assignment $\sigma$, then
$\matrixf\wedge\psi_{\mathit{Def}}\wedge\tau\wedge\sigma$ is unsatisfiable. As $\tau$ uniquely determines the assignment for the arbiter variables and $\tau$, $\restr{\sigma}{U}$ and $\psi_{\mathit{Def}}$ unquietly determine the assignment for the existential variables also the formula $\matrixf\wedge\psi_{\mathit{Def}}\wedge\tau\wedge\restr{\sigma}{U}$ is unsatisfiable. 
This implies that the formula $\matrixf\wedge\matrixf_A\wedge\psi_{\mathit{Def}}\wedge\tau\wedge\restr{\sigma}{U}$ is unsatisfiable.
We can now apply Lemma~\ref{DefinitionRemoval} and conclude that 
$\matrixf\wedge\varphi_A\wedge\tau\wedge\restr{\sigma}{U}$ is unsatisfiable.
\end{proof}

The above lemma implies that the core extraction (line~\ref{alg1:getCore}) is well-defined.

\begin{restatable}{proposition}{propBasicAlgDerivableClauses}
\label{Prop:BasicDerivableClauses}
For each clause $C$ added to the arbiter solver by Algorithm~\ref{alg:BasicAlgorithm} (line~\ref{alg1:addclause}), a clause $C' \subseteq C$ can be derived from $\Phi$ in $\forall$Exp+Res.
\end{restatable}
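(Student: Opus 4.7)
The plan is to chain together three of the preceding results. The clause added at line~\ref{alg1:addclause} is $C \defeq \neg \restr{\rho}{A}$, where $\rho$ is the core returned by line~\ref{alg1:getCore}. Set $\tau' \defeq \restr{\rho}{A}$ and $\rho' \defeq \restr{\rho}{U}$, so that $\rho = \tau' \cup \rho'$. The immediately preceding (unnamed) lemma ensures that the formula $\matrixf \wedge \varphi_A \wedge \tau \wedge \restr{\sigma}{U}$ from which the core is extracted is unsatisfiable, so the call to \textsc{getCore} is well-defined; by the definition of a core we then have that $\varphi \wedge \varphi_A \wedge \tau' \wedge \rho'$ is unsatisfiable too.

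Next I would invoke Lemma~\ref{CorrespondanceArbiterAssignmentAndAnnotatedLiterals} with arbiter assignment $\tau'$ and universal assignment $\rho'$ (extending $\rho'$ to a total assignment if necessary — extensions cannot restore satisfiability) to conclude that the purely propositional formula $\matrixf^{\rho'} \wedge (\tau')^{\rho'}$ is also unsatisfiable. Lemma~\ref{FindDerivation} then yields a clause $C' \subseteq \neg (\tau')^{\rho'}$ that is derivable from $\FFF$ in $\forall$Exp+Res.

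It remains to verify $C' \subseteq C$. By the definition of the annotation operator,
\[(\tau')^{\rho'} = \SB \ell^\sigma \in \tau' \SM \rho' \vDash \sigma \SE \subseteq \tau',\]
so $\neg (\tau')^{\rho'} \subseteq \neg \tau' = C$ and therefore $C' \subseteq C$. I expect the main obstacle to be largely notational bookkeeping rather than a genuinely new argument: one has to be careful that the arbiter variable $e_i^\sigma$ introduced by \textsc{GenerateDefinitions} is identified with the annotated literal $e_i^\sigma$ of $\forall$Exp+Res, so that a term over arbiter variables \emph{is} a term over annotated literals. Under this identification, literals in $\restr{\rho}{A}$ whose annotation is not satisfied by $\rho'$ are precisely those belonging to arbiter clauses that are already trivially satisfied (because the $\neg \sigma$ disjunct holds); these may drop out in the passage from $\tau'$ to $(\tau')^{\rho'}$, which only shrinks the clause and therefore preserves the desired subclause relation.
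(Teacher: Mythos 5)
Your proposal is correct and follows essentially the same route as the paper: the unnamed well-definedness lemma guarantees the core extraction makes sense, and Lemmas~\ref{CorrespondanceArbiterAssignmentAndAnnotatedLiterals} and~\ref{FindDerivation} are then chained to obtain a derivable subclause of the negated (annotated) arbiter term. If anything, your instantiation is the more careful one: you apply the two lemmas to the core $\restr{\rho}{A}$ and $\restr{\rho}{U}$ rather than to $\tau$ and $\restr{\sigma}{U}$ as the paper does, which is what actually yields a subclause of the clause $\neg\restr{\rho}{A}$ added in line~\ref{alg1:addclause} (a subclause of $\neg\tau$ alone would not suffice), and your remarks about totalizing $\restr{\rho}{U}$ and identifying arbiter variables with annotated literals are exactly the bookkeeping the paper leaves implicit.
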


\begin{proof}
By applying the Lemmata~\ref{CorrespondanceArbiterAssignmentAndAnnotatedLiterals} and~\ref{FindDerivation} we can conclude that
a subclause of $\neg\tau^{\restr{\sigma}{U}}$ is derivable from $\Phi$.
As $\neg\tau^{\restr{\sigma}{U}}$ is a subclause of $\neg\tau$ we have proven the lemma.
\end{proof}

\begin{theorem}\label{basicalg:com}
If Algorithm~\ref{alg:BasicAlgorithm} returns {\sc false} for the DQBF $\Phi$ then $\Phi$ is false.
\end{theorem}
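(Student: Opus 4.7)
The plan is to reduce the completeness claim to the soundness of $\forall$Exp+Res via Proposition~\ref{Prop:BasicDerivableClauses}. Concretely, I would first observe that Algorithm~\ref{alg:BasicAlgorithm} returns {\sc false} only when the arbiter solver finds its clause database $\Gamma$ propositionally unsatisfiable. Since every clause added to $\Gamma$ is a clause over the arbiter variables $A$, and $A \subseteq E_{\FFF'}$ under the natural correspondence where each arbiter variable $e_i^\sigma$ is viewed as an annotated existential literal, $\Gamma$ lives in the same syntactic universe as $\forall$Exp+Res derivations.

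Next, I would invoke Proposition~\ref{Prop:BasicDerivableClauses}: for every $C \in \Gamma$ there is a subclause $C' \subseteq C$ derivable from $\Phi$ in $\forall$Exp+Res. Let $\Gamma' \defeq \{C' : C \in \Gamma\}$. Because any assignment satisfying $\Gamma'$ satisfies the (weaker) clauses in $\Gamma$, unsatisfiability of $\Gamma$ yields unsatisfiability of $\Gamma'$. By refutational completeness of propositional resolution, there is a resolution derivation of the empty clause from $\Gamma'$. Each resolution step in this derivation pivots on an arbiter variable, which in the $\forall$Exp+Res view is an annotated existential literal, so the very same sequence of steps is admissible under the resolution rule of Fig.~\ref{ForallExp}.

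Concatenating the $\forall$Exp+Res derivations of each $C'$ from $\Phi$ with this propositional resolution refutation of $\Gamma'$ thus produces an $\forall$Exp+Res refutation of $\Phi$. Applying the soundness of $\forall$Exp+Res (recalled in Section~\ref{sec:preliminaries}), $\Phi$ is false, which is what we wanted.

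The main obstacle I expect is making the annotated/arbiter correspondence precise enough that the resolution steps genuinely transport: one must verify that the pivots used in the propositional refutation of $\Gamma'$ are existential variables in the sense of $\forall$Exp+Res (i.e.\ annotated literals $e_i^\sigma$ with a well-defined annotation $\sigma$), and that combining the per-clause derivations with the propositional refutation respects the usual variable-disjointness conventions for annotated literals. Once this bookkeeping is in place, the remaining argument is essentially the standard lifting of propositional resolution completeness into the annotated calculus.
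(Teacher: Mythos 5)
Your proof is correct and follows essentially the same route as the paper: unsatisfiability of the arbiter solver's clause set, Proposition~\ref{Prop:BasicDerivableClauses} to obtain subsuming clauses derivable in $\forall$Exp+Res, refutational completeness of propositional resolution on the annotated (arbiter) literals to close the refutation, and soundness of $\forall$Exp+Res to conclude. The paper's proof is terser (it compresses the resolution-completeness step into ``so there is a $\forall$Exp+Res refutation''), and your explicit bookkeeping of the arbiter-variable/annotated-literal correspondence is exactly the detail the paper leaves implicit.
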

\begin{proof}
If the algorithm returns {\sc false} then the set $\mathcal{C}$ of clauses in the arbiter solver is unsatisfiable.
By Proposition~\ref{Prop:BasicDerivableClauses} for each $C \in \mathcal{C}$ we can derive a clause $C' \subseteq C$ subsuming $C$ in $\forall$Exp+Res, so there is a $\forall$Exp+Res refutation of~$\FFF$.
As $\forall$Exp+Res is sound~\cite{BeyersdorffBCSS19}, this shows that $\FFF$ is false.
\end{proof}
Finally, Algorithm~\ref{alg:BasicAlgorithm} terminates since at most one arbiter variable is introduced for each existential variable and assignment of its dependency set in the first phase, and there is a limited number of clauses on arbiter variables that can be introduced in the second phase.
In combination with Theorem~\ref{basicalg:cor} and Theorem~\ref{basicalg:com}, we obtain the following result.
\begin{corollary}
Algorithm~\ref{alg:BasicAlgorithm} is a decision procedure for DQBF.
\end{corollary}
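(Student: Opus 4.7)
The plan is to combine the two directions of soundness already established with a finiteness argument to show the algorithm always halts. Theorem~\ref{basicalg:cor} and Theorem~\ref{basicalg:com} together guarantee that whenever the procedure returns, the answer matches the truth value of $\Phi$; what remains is purely termination of the two nested loops.

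First I would argue that \textsc{GenerateDefinitions} halts. Its outer \textbf{for} loop runs exactly $m$ times, once per existential variable. Inside the $i$-th iteration, each execution of the \textbf{while} loop introduces a fresh arbiter variable $e_i^\sigma$ for a distinct $\sigma \in [D_i]$, because the newly added clauses force $e_i$ and $e_i^\sigma$ to agree under $\sigma$, so the same $\sigma$ cannot reappear as the restriction $\restr{\xi}{D_i}$ of a later counterexample. Since $[D_i]$ is finite, and once every $\sigma \in [D_i]$ has a matching arbiter variable the family $\{e_i^\sigma\}_\sigma$ already constitutes a definition of $e_i$ by $D_i \cup A$, the \textbf{while} loop terminates in at most $2^{|D_i|}$ iterations.

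Next I would handle \textsc{FindArbiterAssignment}. The set $A$ produced in the first phase is fixed and finite, so there are at most $2^{|A|}$ arbiter assignments. Each iteration of the main \textbf{loop} that does not return adds a clause $\neg\restr{\rho}{A}$ to the arbiter solver; crucially, because $\rho$ is extracted as a core of the failed assumptions $\tau \wedge \restr{\sigma}{U}$, we have $\restr{\rho}{A} \subseteq \tau$, so $\tau$ itself falsifies $\neg\restr{\rho}{A}$. Every subsequent assignment returned by $\mathit{arbiterSolver}$ must therefore differ from $\tau$, so the $\tau$ values encountered during the run are pairwise distinct, bounding the number of iterations by $2^{|A|}$.

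The main obstacle is really in the first phase: one must verify that exhausting all assignments $\sigma \in [D_i]$ does force $e_i$ to become defined by $D_i \cup A$. This hinges on the definitional role of the arbiter clauses added in line~\ref{alg1:newarbiterclauses}, which ensure $e_i \leftrightarrow e_i^\sigma$ under $\sigma$ for each $\sigma$ already processed; a case distinction over $[D_i]$ then yields that the family $\{e_i^\sigma\}_\sigma$ uniquely determines $e_i$. Together with Theorems~\ref{basicalg:cor} and~\ref{basicalg:com}, this closes the corollary.
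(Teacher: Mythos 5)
Your proof is correct and follows essentially the same route as the paper: termination of the first phase because each existential variable acquires at most one arbiter variable per assignment of its dependency set, termination of the second phase because each added clause on the finitely many arbiter variables excludes the current arbiter assignment, and then Theorems~\ref{basicalg:cor} and~\ref{basicalg:com} for correctness of the returned answer. You spell out the details (distinctness of the $\sigma$'s and of the $\tau$'s) more explicitly than the paper's brief sketch, but the argument is the same.
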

%

\subsection{Combining Definition Extraction with CEGIS}\label{Section:AdvancedAlgorithm}
Discounting SAT calls, the running time of Algorithm~\ref{alg:BasicAlgorithm} is essentially determined by the number of assignments of a dependency set for which the corresponding existential variable is not defined: it introduces an arbiter variable for each such assignment in the first phase, and the number of iterations in the second phase is bounded by the number of arbiter assignments.
As a result, even a single existential variable that is unconstrained and has a large dependency set causes the algorithm to get stuck enumerating universal assignments.

A key insight underlying the success of counter-example guided solvers
for QBF~\cite{JanotaKMC16,Janota18,Tentrup19} is that it is typically overkill to perform complete expansion of universal variables.
Instead, they incrementally refine Skolem functions by taking into account universal assignments that pose a problem for the current solution candidate.\footnote{In these QBF solvers, Skolem functions are typically only indirectly represented by trees of formulas (\emph{abstractions}) that encode viable assignments.}

Following this idea, we now present an improved algorithm (Algorithm~\ref{alg:AdvancedAlgorithm}) in the style of Counter-Example Guided Inductive Synthesis (CEGIS)~\cite{JhaS17}.
It integrates the two phases of Algorithm~\ref{alg:BasicAlgorithm} into a single loop.
In each iteration, it first tries to find definitions for existential variables in terms of their dependency sets and the arbiter variables (\Call{FindDefinitions}{}).
The algorithm then proceeds to a validity check of the definitions under the current arbiter assignment (\Call{CheckArbiterAssignment}{}).
A key difference to Algorithm~\ref{alg:BasicAlgorithm} is that we may not have a definition for each variable at this point.
In this case, we can simply leave the existential variable
unconstrained in the SAT call except for arbiter clauses $\varphi_A$
(and \emph{forcing clauses} $\varphi_F$, which we discuss later). In
the implementation, we limit the SAT solver's freedom to generate
counterexamples by substituting a default value or a heuristically
obtained ``guess'' for the Skolem function. Here, any function on variables
from the dependency set can be used without affecting correctness, one
only has to make sure that counterexamples are not repeated to guarantee termination.

If a counterexample~$\sigma$ is found, procedure \Call{CheckArbiterAssignment}{} returns it to the main loop.
Otherwise, (line~\ref{alg2:consistency}), we have to check whether the SAT call in line~\ref{alg2:validitycheck} returned UNSAT because a model has been found, or whether there is an inconsistency in the formula $\varphi_A \land \varphi_F$ comprised of arbiter and forcing clauses under the current arbiter assignment~$\tau$.
The procedure \Call{CheckConsistency}{} either finds that the model is consistent, in which case Algorithm~\ref{alg:AdvancedAlgorithm} returns {\sc true}, or else computes an assignment~$\sigma$ of the universal variables as a counterexample.
If \Call{CheckArbiterAssignment}{} returns {\sc false}, the main loop resumes in line~\ref{alg2:conflictanalysis} with a call to \Call{AnalyzeConflict}{}.

To see what this procedure does, let us first consider the simple case in which the counterexample~$\sigma$ only contains an assignment of universal variables that was returned by the consistency check.
Then, the existential assignment $\rho_\exists = \emptyset$ is empty, the for-loop is skipped and no new arbiter variables are introduced (line~\ref{alg2:newarbiters}), and the procedure only tries to further simplify the failed arbiter assignment $\rho_A$ in line~\ref{alg2:arbiterminimize}, before adding its negation to the arbiter solver.

Now assume $\rho_\exists$ is nonempty but the case distinction in the body of the for-loop between lines~\ref{alg2:forbegin} and \ref{alg2:forend} always leads to line~\ref{alg2:notforced}.
Then $\mathit{notforced} = \rho_\exists$ and the procedure \Call{NewArbiters}{} creates new arbiter variables~$A'$ and clauses~$\varphi_A'$ for each existential variable $e \in \funDom(\rho_\exists)$ and the universal counterexample $\sigma_\forall$ (restricted to the dependency set $D(e)$ in each case).
Since these arbiter variables determine the assignment of the existential variables in $\funDom(\rho_\exists)$ under~$\sigma_\forall$, we can replace $\rho_\exists$ with the assignment $\rho_A' \defeq \{ e^\xi \in A' \:|\: e \in \rho_\exists\} \cup \{ \neg e^\xi \in A' \:|\: \neg e \in \rho_\exists \}$
(line~\ref{alg2:arbiterassignmentextension})
and conclude that $\varphi \land \varphi_A \land \varphi_F$ is unsatisfiable under the assignment $\rho_A \cup \rho_A' \cup \sigma_\forall$, which only assigns arbiter variables and universal variables.
A clause forbidding the arbiter assignment~$\rho_A \cup \rho_A'$ can now be added as before.

Finally, let us turn to the general case, which includes entailment checks for each existential literal $\ell \in \rho_\exists$ in the minimized counterexample.
These checks are added to reduce the number of new arbiter variables created.
If the literal~$\ell$ is entailed by the assignment~$\sigma_\forall \land \tau$, we add further literals from~$\tau$ to the failed arbiter assignment~$\rho_A$ (if necessary) to ensure that~$\ell$ is entailed by $\sigma_\forall \land \rho_A$.
No arbiter variable has to be introduced for $\var(\ell)$ in this case. Otherwise, if $\neg \ell$ is entailed by $\sigma_\forall \land \tau$, then the counterexample is spurious since $e = \var(\ell)$ must be assigned the opposite way under $\sigma_\forall \land \tau$ by any Skolem function.
To enforce this in the next iteration, the algorithm adds a \emph{forcing clause}~$C$ encoding the implication $\sigma_\forall \land \tau \rightarrow \neg \ell$ (which can be further strengthened by restricting $\sigma_\forall$ to the dependency set of $e$) to $\varphi_F$.
It also sets a flag $\mathit{oppositeForced}$, which causes \Call{AnalyzeConflict}{} to exit with {\sc true} instead of adding new arbiter variables.

If \Call{AnalyzeConflict}{} returns {\sc true},  Algorithm~\ref{alg:AdvancedAlgorithm} proceeds to the next iteration of its main loop with the same arbiter assignment~$\tau$ but additional forcing clauses.
Otherwise, \Call{AnalyzeConflict}{} returns {\sc false} after adding a clause to the SAT solver $\mathit{arbiterSolver}$, and \Call{FindNewArbiterAssignment}{} is called to determine a new arbiter assignment~$\tau$ that satisfies all previously added clauses.
Algorithm~\ref{alg:AdvancedAlgorithm} terminates either when it
discovers a  model or when it cannot find a new arbiter assignment.
\begin{algorithm}[h!]
\caption{Solving DQBF by Definition Extraction (CEGIS Version)}\label{alg:AdvancedAlgorithm}
\begin{algorithmic}[1]
	\vspace{0.5em}

\Procedure{SolveByDefinitionExtractionCEGIS}{$\FFF$}
	\LineComment{$\FFF=\forall u_1\ldots\forall u_n\exists e_1(D_1)\ldots\exists e_m(D_m).\matrixf$}
	\LineComment{$A$: arbiter variables, $\psi_\mathit{Def}$: definitions, $\varphi_A$: arbiter clauses}
	\State$A\gets\emptyset,\,\psi_\mathit{Def}\gets\emptyset,\,\varphi_A\gets\emptyset$
        \State $\varphi_F \gets \emptyset$ \Comment{forcing clauses}
	\State $\tau\gets\emptyset$\Comment{arbiter assignment}
	\State $\mathit{arbiterSolver}\gets\textsc{SatSolver}(\emptyset)$
	\Loop
		\State$\psi_{\mathit{Def}}\gets\Call{FindDefinitions}{\{ e \in E \:|\: e \text{ undefined} \}, \matrixf\wedge\varphi_A\wedge\varphi_F}$
		\State$\mathit{modelValid},\sigma \gets\Call{CheckArbiterAssignment}{\tau}$
		\If{$\mathit{modelValid}$}
			\State\Return{\sc true}
                        \EndIf
                \LineComment{$\sigma$ is a counterexample} 
		\If{$\Call{AnalyzeConflict}{\sigma}$} \label{alg2:conflictanalysis}
                \LineComment{forcing clauses have been added to $\varphi_F$}
                \State \textbf{continue}
                \EndIf
		\If{not $\Call{FindNewArbiterAssignment}$}
			\State\Return{\sc false}
		\EndIf
	\EndLoop		
\EndProcedure

\Statex
    
\Procedure{CheckArbiterAssignment}{$\tau$}
	\State $\mathit{checker}\gets\textsc{SatSolver}(\neg\matrixf\wedge\psi_{\mathit{Def}}\wedge\varphi_F\wedge\varphi_A)$ \label{alg2:validitycheck}
	\If{$\mathit{checker}.\Call{solve}{\tau}$}
        \State$\sigma \gets\mathit{checker}.\Call{values}{E \cup U}$
       	\State\Return{\sc false}, $\sigma$
	\Else
        \State$\mathit{isConsistent},\sigma\gets\Call{CheckConsistency}{\varphi_A\wedge\varphi_F,\tau}$ \label{alg2:consistency}
		\If{$\mathit{isConsistent}$}
			\State\Return{\sc true}, $\emptyset$
                \Else
                        \LineComment{$\sigma \in [U]$ is such that $\varphi_A \land \varphi_F \land \tau \land \sigma$ is unsatisfiable}
                        \State\Return{\sc false}, $\sigma$
		\EndIf
	\EndIf
\EndProcedure
\Statex
\Procedure{FindNewArbiterAssignment}{\null}
	\If{$\mathit{arbiterSolver}.\Call{Solve}$}
		\State$\tau\gets\restr{\mathit{arbiterSolver}.\Call{getModel}}{A}$
		\State\Return{\sc true}
	\EndIf
	\State\Return{\sc false}
\EndProcedure
\algstore{advancedalgorithm}
\end{algorithmic}
\end{algorithm}

\begin{algorithm}[h!]
\begin{algorithmic}[1]
\algrestore{advancedalgorithm}
\Procedure{AnalyzeConflict}{$\sigma$}
\State $\sigma_{\forall} \gets \restr{\sigma}{U}$ \Comment{$\sigma_\forall$ assigns \emph{all} universal variables}
\State $\rho \gets\Call{getCore}{\matrixf\wedge\varphi_A\wedge\varphi_F, \sigma \wedge\tau}$\label{alg2:core}
\State $\rho_\exists\gets\restr{\rho}{E}$, $\rho_A\gets\restr{\rho}{A}$
\State $\mathit{notForced}\gets\emptyset$  \Comment{collect literals $\ell \in \rho_\exists$ that are not implied}
\State $\mathit{oppositeForced} \gets$ {\sc false}
\State $\psi \gets \matrixf\wedge\varphi_A\wedge\varphi_F$
	\For{$\ell\in\rho_\exists$} \label{alg2:forbegin}
		\If{$ \psi \wedge \sigma_\forall\wedge \tau \models \ell$}
			\State $\rho\gets\Call{getCore}{\psi,\sigma_\forall\wedge\tau\wedge \neg \ell}$
			\State $\rho_A\gets\rho_A\cup\restr{\rho}{A}$ \Comment{add reason for $\ell$ to failed arbiter assignment $\rho_A$}
        \ElsIf{$\psi\wedge\sigma_\forall\land \tau\models\neg \ell$}
			\State $\varphi_F\gets\varphi_F\wedge\Call{getForcingClause}{\psi,\sigma_\forall\wedge\tau,\neg\ell}$
			\State $\mathit{oppositeForced}\gets\textsc{true}$
		\Else
			\State $\mathit{notForced}\gets\mathit{notForced}\cup\{\ell\}$ \label{alg2:notforced}
		\EndIf
                \EndFor
                \If{$\mathit{oppositeForced}$}
                \State \Return {\sc true}
                \EndIf
                	\LineComment{no literal was forced to the opposite polarity}\label{alg2:forend}
                        \State$ \varphi_A',A'\gets\Call{newArbiters}{\mathit{notForced},\sigma_\forall}$\label{alg2:newarbiters}
                        \State $\varphi_A \gets \varphi_A \land \varphi_A'$
                        \State $\rho_A\gets\rho_A\wedge\Call{setAssignment}{A',\rho_\exists}$\label{alg2:arbiterassignmentextension}
		\State$\rho_A \gets \restr{\Call{getCore}{\psi,\rho_A \land \sigma_\forall}}{A}$ \label{alg2:arbiterminimize}
		\State$\mathit{arbiterSolver}.\Call{addClause}{\neg\rho_A}$ \label{alg2:addclause}	
                \State \Return {\sc false}
\EndProcedure

\end{algorithmic}
\end{algorithm}
We now prove that Algorithm~\ref{alg:AdvancedAlgorithm} is a decision procedure for DQBF.
As some of the required proofs are similar to related proofs in the previous section we will not give all the details.

As in Section~\ref{Section:BasicAlgorithm}, $A$ denotes a set of arbiter variables and $\varphi_A$ denotes the associated set of arbiter clauses.
Additionally, $\psi$ denotes some formula with variables in $U\cup E\cup A$.

Several proofs given in this section build on related proofs given in Section~\ref{Section:BasicAlgorithm}.
As several properties can be proven similarly as related properties in Section~\ref{Section:BasicAlgorithm} we will
sometimes give no proof, respectively only a sketch of a proof.
\begin{definition}[Forcing Clause]
Let $\ell$ be an existential literal, $\psi$ a formula with $\var(\psi)\subseteq U\cup E\cup A$
and let $\sigma$ be a (partial) assignment for $U\cup A$. 
We say that $\ell$ is \emph{forced} by $\sigma$ in $\psi$ if $\psi\wedge\sigma\wedge\neg \ell$ is unsatisfiable.
If $\ell$ is forced by $\sigma$ then $\neg\restr{\sigma}{D(\var(\ell))\cup A}\vee l$ is a \emph{forcing clause}.
\end{definition}
In particular, if a literal $\ell$ is forced by an assignment $\sigma$ in a formula $\varphi$ then $\varphi\wedge\sigma\vDash \ell$ holds. %
\begin{remark}
We can see that if a literal is forced then it keeps being forced after the addition of a clause.
\end{remark}
\begin{lemma}\label{Forcing_Model_Aux}
Let $\neg p\vee \ell$ be a forcing clause for $\ell$ in $\psi$. 
Then there is no model $F$ for $\qprefix\exists A(\emptyset).\psi$ such that:
\begin{itemize}
	\item For every arbiter literal $a\in p$ the model function $F_{\var(a)}$ satisfies $a$.
	\item For some assignment $\sigma\in[D(\var(\ell))]$ with $\sigma\vDash\restr{p}{U}$ the assignment $F_{\var(\ell)}(\sigma)$ satisfies $\neg \ell$.
\end{itemize}
\end{lemma}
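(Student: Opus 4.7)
\medskip

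\noindent\textbf{Proof plan for Lemma~\ref{Forcing_Model_Aux}.} I will argue by contradiction. By definition, if $\neg p \vee \ell$ is a forcing clause then there is an assignment $\sigma^{\star} \in [U \cup A]$ such that $\psi \wedge \sigma^{\star} \wedge \neg \ell$ is unsatisfiable, and $p = \restr{\sigma^{\star}}{D(\var(\ell)) \cup A}$. Suppose toward a contradiction that there is a model $F = \{f_{e_1},\dots,f_{e_m}\} \cup \{f_a \mid a \in A\}$ for $\qprefix\exists A(\emptyset).\psi$ satisfying the two bullet conditions, witnessed by some $\sigma \in [D(\var(\ell))]$ with $\sigma \vDash \restr{p}{U}$ and $F_{\var(\ell)}(\sigma)$ falsifying $\ell$. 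The strategy is to build a universal assignment $\hat{\sigma} \in [U]$ such that $\hat{\sigma} \cup F(\hat{\sigma})$ satisfies $\psi \wedge \sigma^{\star} \wedge \neg \ell$, contradicting unsatisfiability.

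\medskip

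\noindent The construction is as follows. Define $\hat{\sigma} \defeq \sigma \cup \restr{\sigma^{\star}}{U \setminus D(\var(\ell))}$. Since $\sigma \vDash \restr{p}{U} = \restr{\sigma^{\star}}{D(\var(\ell))}$, the two pieces agree on their overlap, so $\hat{\sigma}$ is a well-defined total universal assignment that coincides with $\sigma^{\star}$ on all of $U$. I then verify in turn that $\hat{\sigma} \cup F(\hat{\sigma})$ satisfies each of $\psi$, $\sigma^{\star}$ and $\neg \ell$: satisfaction of $\psi$ is immediate from $F$ being a model; satisfaction of $\restr{\sigma^{\star}}{U}$ follows from $\hat{\sigma} = \restr{\sigma^{\star}}{U}$; satisfaction of $\restr{\sigma^{\star}}{A} = \restr{p}{A}$ follows from the first bullet, since arbiter variables have empty dependency set and thus $F$ assigns every $a \in p$ to the value that satisfies $a$; and satisfaction of $\neg \ell$ follows because $F_{\var(\ell)}$ depends only on $D(\var(\ell))$, so $F_{\var(\ell)}(\hat{\sigma}) = F_{\var(\ell)}(\restr{\hat{\sigma}}{D(\var(\ell))}) = F_{\var(\ell)}(\sigma)$, which was assumed to falsify $\ell$. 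Combining these gives the desired contradiction.

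\medskip

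\noindent The only delicate point is the consistency of the piecewise definition of $\hat{\sigma}$, which hinges on the hypothesis $\sigma \vDash \restr{p}{U}$; without this compatibility condition, $\sigma$ and $\sigma^{\star}$ could disagree on $D(\var(\ell)) \cap U$ and the construction would fail. The second subtlety, worth noting in the write-up, is that $\restr{\sigma^{\star}}{A} = \restr{p}{A}$ holds because the restriction in the definition of $p$ keeps \emph{all} arbiter variables (only universal variables outside $D(\var(\ell))$ are dropped); thus the first bullet of the lemma is exactly what is needed to pin down the arbiter part of $F(\hat{\sigma})$. Once these bookkeeping issues are settled, the remaining verification is routine.
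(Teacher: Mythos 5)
Your proof is correct and takes essentially the same route as the paper's: both extend the witness $\sigma$ to a universal assignment that agrees with the forcing assignment outside $D(\var(\ell))$, then observe that the resulting play of $F$ satisfies the forcing premise together with $\neg \ell$, contradicting the unsatisfiability of $\psi\wedge\sigma^{\star}\wedge\neg\ell$ (equivalently, $F$ being a model of $\psi$). The only cosmetic caveat is that the paper's definition allows the forcing assignment to be \emph{partial}, so $\hat{\sigma}$ may need a further arbitrary extension to a total assignment of $U$; this changes nothing in the argument.
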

\begin{proof}
As $\neg p\vee \ell$ is a forcing clause we know that there is a partial assignment $\rho$ for $U\cup A$ such that $p=\restr{\rho}{D(\var(\ell))\cup A}$ and $\psi\wedge\rho\wedge\neg \ell$ is unsatisfiable. 
Now let $F$ be a model with the above properties and $\sigma$ as in the second property. 
Furthermore, let $\sigma'\in[U]$ such that $\sigma\subseteq\sigma'$ and such that for each $x\in\funDom(\rho)\cap(U \setminus D(\var(\ell)))$ we have $\sigma'(x)=\rho(x)$. 
As $\sigma'\cup F(\sigma')$ satisfies $\rho\wedge\neg \ell$ we can conclude that $\sigma'\cup F(\sigma')$ falsifies
$\psi$. This means that $F$ is not a model.
\end{proof}
\begin{lemma}\label{Model_correspondance_force}
Let $\Psi$ be a DQBF with $\Psi\defeq\qprefix\exists A(\emptyset).\psi$ and let $\neg p\vee \ell$ be a forcing clause in $\psi$. Moreover, let $\Psi'\defeq\qprefix\exists A.\psi\wedge(\neg p\vee \ell)$. Then $F$ is a model for $\Psi$ if and only if $F$ is a model for $\Psi'$.
\end{lemma}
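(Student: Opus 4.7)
The plan is to prove the two directions separately, with the forward direction (from $\Psi'$-model to $\Psi$-model) being essentially immediate and the reverse direction relying crucially on Lemma~\ref{Forcing_Model_Aux}.

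First I would dispatch the easy direction: if $F$ is a model for $\Psi'$, then for every universal assignment $\sigma$ the total assignment $\sigma \cup F(\sigma)$ satisfies $\psi \wedge (\neg p \vee \ell)$, and hence satisfies $\psi$. Therefore $F$ is a model of $\Psi$ as well.

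For the non-trivial direction, suppose $F$ is a model for $\Psi$. I would argue by contradiction: assume $F$ is not a model for $\Psi'$. Then there exists a universal assignment $\sigma$ such that $\sigma \cup F(\sigma)$ falsifies $\psi \wedge (\neg p \vee \ell)$. Since $F$ is a model of $\Psi$, the matrix $\psi$ itself is satisfied, so the forcing clause must be the falsified conjunct, i.e.\ $\sigma \cup F(\sigma) \vDash p$ and $\sigma \cup F(\sigma) \vDash \neg \ell$. Because $p$ consists of literals over $D(\var(\ell)) \cup A$, and the variables in $A$ are existential with empty dependency sets (so $F_a$ is a constant for each $a \in A$), this splits cleanly into two conditions: every arbiter literal $a \in p$ is satisfied by $F_{\var(a)}$, and the restriction $\sigma' \defeq \restr{\sigma}{D(\var(\ell))}$ satisfies $\restr{p}{U}$ while $F_{\var(\ell)}(\sigma')$ falsifies $\ell$. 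These are exactly the two bullets in the hypothesis of Lemma~\ref{Forcing_Model_Aux}, applied to the DQBF $\qprefix\exists A(\emptyset).\psi = \Psi$. The lemma then asserts that no such model $F$ can exist, contradicting our assumption that $F$ is a model of $\Psi$.

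The main obstacle here is bookkeeping rather than mathematical content: one has to verify carefully that the polarity of $\ell$ (so that falsifying the clause really forces the ``second'' bullet of Lemma~\ref{Forcing_Model_Aux}) and the partition of $p$ into its universal and arbiter parts line up precisely with the hypotheses of that lemma. In particular, one must use that $\restr{p}{U} \subseteq D(\var(\ell))$ holds by definition of a forcing clause, which is what lets $\sigma' = \restr{\sigma}{D(\var(\ell))}$ satisfy $\restr{p}{U}$. Once these small verifications are made, the result follows immediately from Lemma~\ref{Forcing_Model_Aux}.
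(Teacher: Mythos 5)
Your proposal is correct and follows essentially the same route as the paper's proof: the direction from $\Psi'$ to $\Psi$ is dispatched as immediate, and the converse is argued by contradiction by extracting a universal assignment falsifying the forcing clause and checking that $F$ then satisfies exactly the two hypotheses of Lemma~\ref{Forcing_Model_Aux}. Your write-up is in fact somewhat more careful than the paper's, which simply asserts that ``$F$ satisfies the properties given in Lemma~\ref{Forcing_Model_Aux}'' without spelling out the split of $p$ into its arbiter and universal parts.
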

\begin{proof}
If $F$ is a model for $\Psi'$ then it is necessarily also a model for $\Psi$. Thus, it suffices to only consider the other direction of the proof.
Now assume that $F$ is a model for $\Psi$ but not for $\Psi'$ and show a contradiction.
As $F$ is not a model for $\Psi'$ there must be an assignment $\sigma$ for the universal variables such that
$\sigma\cup F(\sigma)$ falsifies $\neg p\vee \ell$. Now we can see that $F$ satisfies the properties given in Lemma~\ref{Forcing_Model_Aux}. This means that $F$ is not a model for $\Psi$. This proofs the lemma.
\end{proof}
An immediate consequence of the above lemma is the following corollary.
\begin{corollary}
Let $C=\neg p\vee \ell$ be a forcing clause in $\psi$. 
Then $\qprefix\exists A.\psi\wedge C$ is true if and only if $\qprefix\exists A.\psi$ is true.
\end{corollary}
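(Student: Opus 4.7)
The plan is to derive this corollary as an immediate consequence of the preceding Lemma~\ref{Model_correspondance_force}, combined with the definition of DQBF truth given in the Preliminaries (a DQBF is true precisely when it has a model). Since Lemma~\ref{Model_correspondance_force} asserts that $\qprefix\exists A(\emptyset).\psi$ and $\qprefix\exists A(\emptyset).\psi\wedge C$ possess exactly the same set of models $F$, the existence of a model for one is equivalent to the existence of a model for the other, which is exactly what the corollary states.

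Concretely, first I would instantiate Lemma~\ref{Model_correspondance_force} with the forcing clause $C = \neg p \vee \ell$ to obtain that a family $F$ of Skolem functions is a model for $\qprefix\exists A(\emptyset).\psi$ if and only if it is a model for $\qprefix\exists A(\emptyset).\psi \wedge C$. Next, I would argue each direction of the biconditional separately: if $\qprefix\exists A.\psi$ is true, some model $F$ exists, and the lemma promotes $F$ to a model of $\qprefix\exists A.\psi\wedge C$, making the latter true; the converse direction is symmetric.

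There is essentially no real obstacle here — the corollary is a syntactic repackaging of the lemma at the level of truth values rather than models. The only minor point worth mentioning in the writeup is the notational mismatch between $\exists A$ in the corollary and $\exists A(\emptyset)$ in Lemma~\ref{Model_correspondance_force}; these refer to the same quantifier block, since arbiter variables are by construction introduced with empty dependency sets, so the two statements are talking about the same DQBFs.
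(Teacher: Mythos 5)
Your proposal is correct and matches the paper exactly: the paper states this corollary as an immediate consequence of Lemma~\ref{Model_correspondance_force}, which is precisely the reduction you carry out (same models, hence same truth value). The remark about the $\exists A$ versus $\exists A(\emptyset)$ notation is a fair observation but does not change the substance.
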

The above results imply that forcing clauses can be added to the matrix of a DQBF without changing its models. In particular, the resulting DQBF has the same truth value.
\begin{lemma}\label{CorrectAux}
Let $\FFF'$ be the DQBF $\FFF'\defeq\qprefix\exists A(\emptyset).\matrixf\wedge\varphi_A$. 
Then, $\FFF$ is true if and only if $\FFF'$ is true.
\end{lemma}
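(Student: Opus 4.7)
The plan is to prove both directions of the equivalence by explicit construction/restriction of models, exploiting the fact that every arbiter variable $e_i^\sigma \in A$ has an empty dependency set in $\FFF'$ and thus must be realised by a Boolean constant.

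For the easy direction, suppose $\FFF'$ is true with model $F' = \{f_e\}_{e\in E} \cup \{c_{e_i^\sigma}\}_{e_i^\sigma\in A}$. Since every universal assignment together with $F'$ evaluates the matrix $\matrixf \wedge \varphi_A$ to \textsc{true}, it in particular satisfies $\matrixf$. Discarding the constants for the arbiter variables, the restriction $\{f_e\}_{e\in E}$ is therefore a model of $\FFF$.

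For the non-trivial direction, suppose $\FFF$ has a model $F=\{f_{e_1},\ldots,f_{e_m}\}$. Recall from Section~\ref{Section:BasicAlgorithm} that each arbiter variable $e_i^\sigma$ is introduced together with the clauses $(e_i^\sigma\vee\neg\sigma\vee\neg e_i)\wedge(\neg e_i^\sigma\vee\neg\sigma\vee e_i)$, encoding $\sigma\rightarrow(e_i\leftrightarrow e_i^\sigma)$, where $\sigma\in[D_i]$. I extend $F$ to a candidate model $F'$ for $\FFF'$ by keeping the Skolem functions $f_{e_i}$ for the original existential variables and, for every $e_i^\sigma\in A$, assigning it the constant value $c_{e_i^\sigma}\defeq f_{e_i}(\sigma)$. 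It remains to check that $F'$ satisfies $\matrixf \wedge \varphi_A$ under every universal assignment $\rho\in[U]$. The matrix $\matrixf$ is satisfied because $F$ is a model of $\FFF$ and $\matrixf$ does not mention any arbiter variable. For an arbiter clause associated with $e_i^\sigma$, either $\rho\nvDash\sigma$, in which case $\neg\sigma$ is satisfied, or $\rho\vDash\sigma$, in which case $\restr{\rho}{D_i}=\sigma$ and hence $f_{e_i}(\restr{\rho}{D_i})=f_{e_i}(\sigma)=c_{e_i^\sigma}$, so that $e_i$ and $e_i^\sigma$ take the same value under $\rho\cup F'(\rho)$ and both arbiter clauses are satisfied.

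There is no real obstacle here; the only subtle point is the alignment between the universal assignment restricted to $D_i$ and the term $\sigma\in[D_i]$ that indexes the arbiter variable, which is used to ensure that the value chosen for $e_i^\sigma$ agrees with $f_{e_i}$ on precisely those $\rho$ where the arbiter clauses actually constrain it. Combining the two directions yields the claimed equivalence.
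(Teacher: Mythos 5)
Your proof is correct and follows essentially the same route as the paper: the backward direction restricts a model of $\FFF'$ to $E$, and the forward direction extends a model $F$ of $\FFF$ by assigning each arbiter variable $e_i^\sigma$ the constant $f_{e_i}(\sigma)$, then checks the arbiter clauses by case distinction on whether $\rho\vDash\sigma$. The only cosmetic difference is that the paper verifies the arbiter clauses by contradiction while you verify them directly; the underlying argument is identical.
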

\begin{proof}
We assume that $\Phi$ is true. This means that there is a model function $f_e$ in $\Phi$ for each existential variable $e$. We now construct model functions $f'_e$ for $\Phi'$ by:
\[f'_x\defeq
\begin{cases}
	f_x &\text{for }x\in E\\
	f_e(\sigma) &\text{for } x=e^{\sigma}
\end{cases}
\]
Obviously the model given above satisfies $\matrixf$ for all assignments to the universal variables.
Now assume that there is a universal assignment $\rho$ such that a clause $\ell^{\sigma}\vee\neg\sigma\vee\neg \ell$ in $\varphi_A$ is falsified under the above model. Let $e=\var(\ell)$ then it suffices to consider the case $\restr{\rho}{D(e)}=\sigma$. 
Moreover, we can assume that $f_e$ satisfies $\ell$ under $\sigma$---otherwise the clause would be satisfied. 
The above assumptions imply that $f_{e^\sigma}\vDash\ell^\sigma$. But this yields a contradiction to the initial assumption.
As the above clauses were arbitrary we can conclude that $\matrixf\wedge\varphi_A$ is satisfied by the model under each universal assignment.

Now assume that $\FFF'$ is true. We can see that by restricting a model of $\FFF'$ to the variables in $E$ we get a model for $\FFF$. This proves the equivalence.
\end{proof}

By combining the above results we can derive the following corollary.
\begin{corollary}\label{Forcing_equiv_Aux}
Let $C_1,\ldots,C_k$ be clauses such that for each index $i$, the clause $C_i$ is a forcing clause in
$\matrixf\wedge\varphi_A\wedge\bigwedge_{1\leq j<i}C_j$. 
Then the DQBF $\qprefix\exists A(\emptyset).\matrixf\wedge\varphi_A\wedge\bigwedge_{1\leq i\leq k}C_i$ is true if and only if $\Phi$ is true.
\end{corollary}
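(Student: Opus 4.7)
The plan is to prove the corollary by a straightforward induction on $k$, using Lemma~\ref{CorrectAux} as the base case and iteratively applying the corollary immediately following Lemma~\ref{Model_correspondance_force} to peel off forcing clauses one at a time.

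For the base case $k = 0$, the conjunction of forcing clauses is empty, so the statement reduces to showing that $\qprefix\exists A(\emptyset).\matrixf\wedge\varphi_A$ is true if and only if $\FFF$ is true, which is exactly Lemma~\ref{CorrectAux}.

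For the inductive step, set $\psi_j \defeq \matrixf\wedge\varphi_A\wedge\bigwedge_{1\leq i\leq j} C_i$. Assume the claim holds for $k-1$, so that $\qprefix\exists A(\emptyset).\psi_{k-1}$ is true iff $\FFF$ is true. By hypothesis, $C_k$ is a forcing clause in $\psi_{k-1}$, so the corollary of Lemma~\ref{Model_correspondance_force} applied with the formula $\psi_{k-1}$ and clause $C_k$ gives that $\qprefix\exists A(\emptyset).\psi_k$ is true iff $\qprefix\exists A(\emptyset).\psi_{k-1}$ is true. Chaining these two equivalences yields the desired conclusion.

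The argument is essentially a routine induction, so there is no substantive obstacle; the only thing to be careful about is that the premise supplies exactly what is needed at each step, namely that $C_k$ is a forcing clause in the formula that already contains $C_1,\ldots,C_{k-1}$ (rather than only in $\matrixf\wedge\varphi_A$), which is precisely the hypothesis under which the corollary of Lemma~\ref{Model_correspondance_force} is applicable. With this in hand, the induction goes through without modification.
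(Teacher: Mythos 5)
Your proof is correct and follows the same route the paper intends: the paper states this corollary follows "by combining the above results," i.e., Lemma~\ref{CorrectAux} for the base case and the corollary of Lemma~\ref{Model_correspondance_force} applied iteratively, which is exactly your induction. Your observation that the hypothesis requires $C_i$ to be a forcing clause in the formula already containing $C_1,\ldots,C_{i-1}$ is precisely the point that makes the induction go through.
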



Subsequently, we will give a mild generalization of Lemma~\ref{TrueDQBFAux}. 
As this Lemma can be proven similarly as Lemma~\ref{TrueDQBFAux} we do not give a proof.
\begin{lemma}\label{CEGIS_SAT_CONDITION}
Assume there is a linear ordering for $E$, denoted by $<_E$. 
If there is a set $E'\subseteq E$ such that
there is a formula $\psi_e$ with $\var(\psi_e)\subseteq D(e)\cup\{x\in E\mid D(x)\subseteq D(e),x<_E e\}$ such that 
$\neg\varphi\wedge\bigwedge_{e\in E}(e\leftrightarrow\psi_e)$ is unsatisfiable then the DQBF $\FFF$ is true.
\end{lemma}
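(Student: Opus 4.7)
The plan is to closely follow the construction in the proof of Lemma~\ref{TrueDQBFAux}, extending it to handle the existential variables in $E\setminus E'$ that are not covered by any definition. The conjunction $\bigwedge_{e \in E}(e \leftrightarrow \psi_e)$ in the statement should be read as ranging over the subset $E'$, and the unsatisfiability assumption then says: for every total assignment that respects the definitions of variables in~$E'$, the matrix $\matrixf$ is satisfied.

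First I would construct, for every $e \in E\setminus E'$, an arbitrary Skolem function $f_e$ depending only on $D(e)$ (for instance, the constant \textsc{true} function). These variables are free to be any function of their dependency set because the unsatisfiability hypothesis already quantifies over all their values. Next, processing $E'$ in the order $<_E$, I would define $f_e$ for $e \in E'$ by induction: given $\sigma \in [D(e)]$, first form the partial assignment
\[
\rho(\sigma) \defeq \{(x, f_x(\restr{\sigma}{D(x)})) \mid x \in \var(\psi_e)\setminus D(e)\},
\]
which is well-defined because each such $x$ satisfies $D(x)\subseteq D(e)$ (so $\restr{\sigma}{D(x)}$ makes sense) and $x <_E e$ (so $f_x$ has already been constructed). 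Then I set $f_e(\sigma) \defeq \psi_e[\sigma \cup \rho(\sigma)]$; the side conditions on $\var(\psi_e)$ guarantee this is a Boolean value determined entirely by $\sigma$.

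Now I would verify that $F \defeq \{f_e\}_{e \in E}$ is a model of $\FFF$. Fix any universal assignment $\mu \in [U]$ and consider the total assignment $\mu \cup F(\mu)$. By a straightforward induction along $<_E$ — essentially the substitution argument sketched at the end of the proof of Lemma~\ref{TrueDQBFAux} — one shows that for every $e \in E'$, the value $f_e(\restr{\mu}{D(e)})$ equals $\psi_e[\mu \cup F(\mu)]$. Consequently $\mu \cup F(\mu)$ satisfies $\bigwedge_{e \in E'}(e \leftrightarrow \psi_e)$. The assumed unsatisfiability of $\neg\matrixf \wedge \bigwedge_{e \in E'}(e \leftrightarrow \psi_e)$ then forces $\mu \cup F(\mu) \models \matrixf$, so $F$ is indeed a winning strategy.

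The only delicate point is the substitution lemma: verifying $f_e(\restr{\mu}{D(e)}) = \psi_e[\mu \cup F(\mu)]$ requires unfolding the recursive definition of $f_e$ and using $D(x)\subseteq D(e)$ to equate $\restr{\mu}{D(x)}$ with $\restr{(\restr{\mu}{D(e)})}{D(x)}$. This is routine but is the step where the dependency-containment hypothesis is critical; everything else is a direct adaptation of Lemma~\ref{TrueDQBFAux}, which is why the authors omit a detailed proof here.
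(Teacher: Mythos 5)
Your proof is correct and takes exactly the approach the paper intends: the authors omit the proof of Lemma~\ref{CEGIS_SAT_CONDITION}, stating only that it goes ``similarly'' to Lemma~\ref{TrueDQBFAux}, and your write-up is precisely that adaptation --- reading the conjunction as ranging over $E'$ (consistent with how the lemma is invoked in Theorem~\ref{CAGIS:cor}), assigning arbitrary Skolem functions to $E\setminus E'$, and building the remaining $f_e$ inductively along $<_E$ as in the paper's construction. Your handling of the substitution step (using $D(x)\subseteq D(e)$ to identify $\restr{\mu}{D(x)}$ with the restriction of $\restr{\mu}{D(e)}$) is the same argument the paper sketches at the end of the proof of Lemma~\ref{TrueDQBFAux}.
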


\begin{theorem}\label{CAGIS:cor}
If Algorithm~\ref{alg:AdvancedAlgorithm} returns \textsc{true} for a DQBF $\Phi$ then $\Phi$ is true.
\end{theorem}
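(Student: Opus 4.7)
The plan is to mirror the structure of the proof of Theorem~\ref{basicalg:cor} but account for the two new features of Algorithm~\ref{alg:AdvancedAlgorithm}: (i) not every existential variable needs to have been equipped with a definition when the algorithm stops, and (ii) the matrix has been augmented with forcing clauses~$\varphi_F$ in addition to the arbiter clauses~$\varphi_A$.

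First I would unpack the exit condition. If Algorithm~\ref{alg:AdvancedAlgorithm} returns \textsc{true}, then the call to \Call{CheckArbiterAssignment}{$\tau$} returned \textsc{true}, which in turn means that the validity solver in line~\ref{alg2:validitycheck} declared $\neg\matrixf\wedge\psi_\mathit{Def}\wedge\varphi_F\wedge\varphi_A\wedge\tau$ unsatisfiable and that \Call{CheckConsistency}{} succeeded, so that $\varphi_A\wedge\varphi_F\wedge\tau$ is consistent. Let $E'\subseteq E$ denote the set of existential variables for which $\psi_\mathit{Def}$ contains an equivalence $e\leftrightarrow\psi_\mathit{Def}^e$.

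Next I would form the auxiliary DQBF $\FFF'\defeq\qprefix\exists A(\emptyset).\;\matrixf\wedge\varphi_A\wedge\varphi_F$ that treats the arbiter variables as existentials with empty dependency sets, and invoke two equivalence results already established. By Lemma~\ref{CorrectAux}, adjoining the arbiter clauses and arbiter variables preserves the truth value of the original DQBF, and by Corollary~\ref{Forcing_equiv_Aux} adding the forcing clauses (which by construction are forcing in the matrix they are appended to) preserves it further. Hence it suffices to prove that $\FFF'$ is true.

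To see that $\FFF'$ is true, I would fix a linear ordering $<_E$ on the existential variables of $\FFF'$ that places the arbiter variables first (any order among them will do, since each has empty dependency set), and then define $\psi_{e^\sigma}\defeq\tau(e^\sigma)$ for each arbiter variable in $A$, and $\psi_e\defeq\psi_\mathit{Def}^e$ for each $e\in E'$. The restriction on $\var(\psi_e)$ in Lemma~\ref{CEGIS_SAT_CONDITION} is met: arbiter variables have empty dependency and precede everything, while each $\psi_\mathit{Def}^e$ uses only variables from $D(e)\cup A$. Because $\neg\matrixf\wedge\psi_\mathit{Def}\wedge\varphi_F\wedge\varphi_A\wedge\tau$ is unsatisfiable, substituting the constant definitions for the arbiter variables turns this into the hypothesis of Lemma~\ref{CEGIS_SAT_CONDITION}, applied to $\FFF'$ with the set $E'\cup A$ of defined existentials. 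Concluding that $\FFF'$ is true, and hence $\FFF$ is true, finishes the argument.

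The main obstacle I anticipate is a bookkeeping one rather than a conceptual one: verifying that Lemma~\ref{CEGIS_SAT_CONDITION} truly applies in the ``mixed'' situation where some existentials are defined and others are left free, and that the unsatisfiability witnessed by the validity solver corresponds exactly to the unsatisfiability the lemma asks for. In particular, one needs to argue that treating the undefined existentials as unconstrained (beyond what $\varphi_A\wedge\varphi_F$ enforces under~$\tau$) is compatible with the consistency guarantee provided by \Call{CheckConsistency}{}; any remaining slack in the choice of model functions for undefined existentials can be filled in arbitrarily, since the unsatisfiability of $\neg\matrixf\wedge\psi_\mathit{Def}\wedge\varphi_F\wedge\varphi_A\wedge\tau$ already rules out every extension that would falsify $\matrixf$.
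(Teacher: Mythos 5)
Your proposal is correct and follows essentially the same route as the paper: reduce to the auxiliary DQBF $\qprefix\exists A(\emptyset).\,\matrixf\wedge\varphi_A\wedge\varphi_F$ via Lemma~\ref{CorrectAux} and Corollary~\ref{Forcing_equiv_Aux}, then apply Lemma~\ref{CEGIS_SAT_CONDITION} to the defined variables. If anything, you are slightly more careful than the paper's own writeup, which silently drops the arbiter assignment~$\tau$ from the unsatisfiable formula; your step of converting $\tau$ into constant definitions for the arbiter variables (as in the proof of Theorem~\ref{basicalg:cor}) is exactly what is needed to close that gap.
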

\begin{proof}
We assume that the algorithm returns \textsc{true} and show that the DQBF $\FFF$ is true.
We know that we have a set $A$ of arbiter variables, a set $\varphi_A$ of arbiter clauses, a set $\varphi_F$ of forcing clauses and a set $E'\subseteq E$ such that
each $e\in E'$ has a definition $\psi_e$ in $\matrixf\wedge\varphi_A\wedge\varphi_F$ by $D(e)\cup A$.
Let $\psi_{\mathit{Def}}\defeq\bigwedge_{e\in E'}(e\leftrightarrow\psi_e)$ and $\FFF'\defeq\qprefix\exists A(\emptyset).\matrixf\wedge\varphi_A\wedge\varphi_F$.
By Corollary~\ref{Forcing_equiv_Aux}, we know that $\FFF$ is true if and only if $\FFF'$ is true.
As the algorithm returns {\sc true}, we know that $\neg\matrixf\wedge\varphi_A\wedge\varphi_F\wedge\psi_{\mathit{Def}}$ is unsatisfiable.
By Lemma~\ref{CEGIS_SAT_CONDITION} and the above property we know that $\FFF'$ is true.
Thus, $\FFF$ is true as well.
\end{proof}

\begin{lemma}\label{Correspondance_Main_Lemma}
Let $\varphi_F$ be a set of forcing clauses, $\sigma$ be an assignment for universal variables and $\tau\in[A]$. If $\matrixf\wedge\varphi_A\wedge\varphi_F\wedge\tau\wedge\sigma$ is unsatisfiable then we can derive a subclause of $\neg\tau^\sigma$ from $\FFF$ in $\forall$Exp+Res.
\end{lemma}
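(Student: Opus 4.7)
The plan is to extend the proof of Proposition~\ref{Prop:BasicDerivableClauses} by accounting for the additional forcing clauses in $\varphi_F$. The key observation is that every forcing clause introduced by Algorithm~\ref{alg:AdvancedAlgorithm} can itself be simulated in $\forall$Exp+Res, so that any reasoning that uses $\varphi_F$ can be replayed using only clauses that are $\forall$Exp+Res-derivable from $\FFF$.

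First, I would generalize Lemma~\ref{CorrespondanceArbiterAssignmentAndAnnotatedLiterals} to the presence of forcing clauses. Define the instantiation of a forcing clause $C=\neg p\vee \ell$ under a universal assignment $\rho$ as $C^\rho \defeq \neg p|_A \vee \ell^{p|_U}$ if $\rho\vDash p|_U$, and as the trivially true clause otherwise. Set $\varphi_F^\rho \defeq \{C^\rho \mid C\in\varphi_F\}$. The same assignment-construction used in Lemma~\ref{CorrespondanceArbiterAssignmentAndAnnotatedLiterals} shows that $\matrixf\wedge\varphi_A\wedge\varphi_F\wedge\tau\wedge\sigma$ is equisatisfiable with $\matrixf^\sigma\wedge\tau^\sigma\wedge\varphi_F^\sigma$: matrix clauses and forcing clauses are both CNF clauses over $U\cup E\cup A$ and are handled uniformly by the case analysis in the original proof.

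The central step is an induction on the order in which the forcing clauses of $\varphi_F$ were created by \Call{getForcingClause}{} during the run of Algorithm~\ref{alg:AdvancedAlgorithm}. Let $C_k = \neg p\vee \ell$ be added in the $k$-th such call, and suppose inductively that for every $j<k$ and every universal assignment $\rho$ compatible with the universal part of $C_j$, the annotated clause $C_j^\rho$ is derivable from $\FFF$ in $\forall$Exp+Res. The precondition for adding $C_k$ is that $\psi_k\wedge\sigma_\forall\wedge\tau\wedge\neg\ell$ is unsatisfiable, where $\psi_k = \matrixf\wedge\varphi_A\wedge\varphi_F^{(k-1)}$ is the formula present at that moment. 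Applying the generalized correspondence above with $\rho=\sigma_\forall$ turns this into unsatisfiability of $\matrixf^{\sigma_\forall}\wedge(\tau\wedge\neg\ell)^{\sigma_\forall}\wedge (\varphi_F^{(k-1)})^{\sigma_\forall}$, and then the combination of Lemma~\ref{Aux:PropositionalResult} and Lemma~\ref{FindDerivation} yields a resolution derivation of a subclause of $\neg(\tau\wedge\neg\ell)^{\sigma_\forall}$; since the inductive hypothesis supplies $\forall$Exp+Res-derivations of every clause of $(\varphi_F^{(k-1)})^{\sigma_\forall}$ used in that resolution proof, the whole derivation lifts to a $\forall$Exp+Res-derivation of a subclause of $\neg p|_A \vee \ell^{p|_U} = C_k^{\sigma_\forall}$. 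A subclause of a forcing instantiation is still a clause on arbiter literals and a single annotated existential literal, and that is all that is needed below.

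With both pieces in place, the main statement follows in the same way as Proposition~\ref{Prop:BasicDerivableClauses}. The hypothesis that $\matrixf\wedge\varphi_A\wedge\varphi_F\wedge\tau\wedge\sigma$ is unsatisfiable gives, via the generalized correspondence lemma, that $\matrixf^\sigma\wedge\tau^\sigma\wedge\varphi_F^\sigma$ is unsatisfiable. Lemma~\ref{Aux:PropositionalResult} (after trimming $\tau^\sigma$ to a minimal unsatisfiable core) extracts a propositional resolution proof of a subclause of $\neg\tau^\sigma$ from $\matrixf^\sigma\wedge\varphi_F^\sigma$; the clauses of $\matrixf^\sigma$ are $\forall$Exp+Res axioms and, by the inductive claim above, each clause of $\varphi_F^\sigma$ used is derivable from $\FFF$ in $\forall$Exp+Res. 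Splicing these derivations in front of the propositional proof yields the desired $\forall$Exp+Res derivation of a subclause of $\neg\tau^\sigma$ from $\FFF$. The delicate point I expect to wrestle with is the bookkeeping of annotations in the inductive step: the arbiter literals appearing in $C_k^{\sigma_\forall}$ come from $\tau$ at the moment $C_k$ was generated, and one has to check that these are exactly the literals preserved by restriction to $A$ so that they line up with $p|_A$.
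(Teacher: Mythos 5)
Your route is genuinely different from the paper's, and I believe it can be made to work, but be aware of what each approach buys and where yours still has load-bearing details to fill in. The paper does \emph{not} derive the forcing clauses inside $\forall$Exp+Res at all. Its proof splits on whether $\matrixf\wedge\varphi_A\wedge\tau\wedge\sigma$ is already unsatisfiable (in which case Lemmata~\ref{CorrespondanceArbiterAssignmentAndAnnotatedLiterals} and~\ref{FindDerivation} finish immediately); otherwise it picks the \emph{first} forcing clause $\neg p\vee\ell$ falsified by a satisfying assignment of the $\varphi_F$-free formula and argues semantically, via Lemma~\ref{Forcing_Model_Aux} and the model-preservation of forcing clauses, that every model of $\qprefix\exists A(\emptyset).\matrixf\wedge\varphi_A$ must falsify some minimal subset $S$ of the arbiter literals of $p$. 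It then transfers $S$ to a DQBF $\tilde{\FFF}$ over the original variables (replacing each arbiter literal $\ell^{\xi}$ by the matrix clause $\neg\xi\vee\ell$), invokes refutational completeness of $\forall$Exp+Res on the full expansion, and extracts $\overline{S}\subseteq\neg\tau$ with Lemma~\ref{Aux:PropositionalResult}. Your proof instead simulates each forcing clause syntactically, by induction on creation order, and splices the resulting derivations into one resolution refutation; this is more constructive (it tracks the algorithm's actual run rather than appealing to completeness on a full expansion) but it forces you to prove two things the paper sidesteps. First, the generalized correspondence lemma is not quite ``the same case analysis'': the arbiter part $\restr{p}{A}$ of a forcing clause equals the whole assignment $\tau$ at creation time, so its literals carry annotations $\xi$ that need \emph{not} match $\restr{\rho}{D(e)}$, and the assignment $\mu$ in the forward direction of Lemma~\ref{CorrespondanceArbiterAssignmentAndAnnotatedLiterals} is only defined on variables annotated by restrictions of $\rho$; you must extend $\mu$ by $\tau$ on the remaining annotated variables and re-check the arbiter clauses. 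Second, your instantiation $\ell^{\restr{p}{U}}$ is not a legal $\forall$Exp+Res annotation when $\restr{p}{U}$ is partial; it should be $\ell^{\restr{\rho}{D(\var(\ell))}}$, which makes $C^\rho$ depend on $\rho$ and means the induction must supply a derivation for every compatible completion. There is also a corner case where $\var(\ell)^{\restr{\sigma_\forall}{D(\var(\ell))}}$ already lies in $\funDom(\tau)$, so the literal set fed to Lemma~\ref{Aux:PropositionalResult} does not have pairwise distinct variables; there the forcing clause follows from an arbiter clause alone and must be handled separately. None of these is fatal, but they are exactly the bookkeeping you flagged, and they are where the work lives.
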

\begin{proof}
If $\matrixf\wedge\varphi_A\wedge\tau\wedge\sigma$ is unsatisfiable then the result follows by Lemma~\ref{CorrespondanceArbiterAssignmentAndAnnotatedLiterals} and Lemma~\ref{FindDerivation}.
We can now assume that $\matrixf\wedge\varphi_A\wedge\tau\wedge\sigma$ is satisfied by an assignment $\rho$.
Now there has to be a forcing clause $(\neg p\vee\ell)$ that is falsified by $\rho$ such that the set of forcing clauses $\hat{\varphi}_F$ that were introduced before $(\neg p\vee\ell)$, is satisfied by $\sigma$---note $(\neg p\vee\ell)$ is a forcing clause for $\matrixf\wedge\varphi_A\wedge\hat{\varphi}_F$.
As $\rho$ falsifies $\neg p$ it satisfies $p$, as $\rho$ additionally satisfies $\tau$ we can conclude that the arbiter literals in $p$ are contained in $\tau$
Subsequently, $v_i,\ldots,v_k$ shall denote the universal literals in $p$ and $\ell_1^{\sigma_1},\ldots,\ell_l^{\sigma_l}$ the arbiter literals in $p$. Moreover, $\FFF'$ shall denote the DQBF $\FFF'\defeq\qprefix\exists A(\emptyset).\matrixf\wedge\varphi_A\wedge\hat{\varphi}_F$. Because, of the above property $\FFF'$ cannot have a model $F$ with $F\vDash\bigwedge\ell_i^{\sigma_i}$ and $F(\restr{\rho}{U})\vDash\ell$. 
On the other hand because of Lemma~\ref{Forcing_Model_Aux} there can also be no model with $F\vDash\bigwedge\ell_i^{\sigma_i}$ and $F(\restr{\rho}{U})\vDash\neg\ell$. This means that if $\FFF'$ has a model than it has to falsify $\bigwedge\ell_i^{\sigma_i}$. We know that adding forcing clause to a matrix does not change models. Thus, if $\FFF''\defeq\qprefix\exists A(\emptyset).\matrixf\wedge\varphi_A$ has a model $F$ than it has
to falsify $\bigwedge\ell_i^{\sigma_i}$. This means that there is some minimal subset $S$ of $\{\ell_1^{\sigma_1},\ldots,\ell_l^{\sigma_l}\}$ such that $\qprefix\exists A(\emptyset).\matrixf\wedge\varphi_A\wedge\bigwedge S$ is false. Now let $\varphi_A^S$ be the subset of $\varphi_A$ with arbiters in $S$. We show that $\hat{\FFF}\defeq\qprefix\exists A(\emptyset).\matrixf\wedge\varphi_A^S\wedge\bigwedge S$ is false. We assume the opposite. In this case there is a model for the formula. But such a model can easily be extended to a model for $\FFF''$, which yields a contradiction. 

Next we argue that $\tilde{\FFF}\defeq\qprefix.\matrixf\wedge\bigwedge\{\neg\sigma\vee\ell\mid\ell^\sigma\in S\}$ is false.
Again, assume the opposite, i.e.\ there is a model $F$. We now show that that this model can be extended to a model $F'$ for $\hat{\FFF}$. Now let $\ell^\sigma\in S$, we set $F_{\var(\ell^\sigma)}=F_{\var(\ell)}(\sigma)$.
We can see that $F'$ is a model for $\hat{\FFF}$, but this is a contradiction. Moreover, by using similar arguments as above we can show that for any subset $S'$ of $S$ the formula $\qprefix.\matrixf\wedge\bigwedge\{\neg\sigma\vee\ell\mid\ell^\sigma\in S'\}$ is true.

Now let $C$ denote the set of clause that represent the full expansion of $\tilde{\FFF}$---in the sense of $\forall$Exp+Res.
As $\tilde{\FFF}$ is false, there is a $\forall$Exp+Res proof. This means there is a refutation by resolution for $C$. We can now on the one hand see that $S\subseteq C$ and on the other hand, because of the minimality of $S$ that removing any element from $S$ from $C$ would make $C$ satisfiable.
By Lemma~\ref{FindDerivation} we can now conclude that we can derive $\overline{S}\defeq\{\neg x\mid x\in S\}$  by resolution from $C$. From this we can conclude that $\overline{S}$ is derivable from $\FFF$ in $\forall$Exp+Res. As $\overline{S}$ is a subset of $\neg\tau$ this proves the lemma.
\end{proof}

As in the case of Algorithm~\ref{alg:BasicAlgorithm}, the correctness of {\sc false} answers for Algorithm~\ref{alg:AdvancedAlgorithm} follows from a correspondence with $\forall$Exp+Res derivations.
Before we prove that each clause that is added to the arbiter solver correponds to a clause that can be derived by $\forall$Exp+Res we show that the clauses that are added to the arbiter solver are well-defined.

\begin{lemma}
The core extraction in line~\ref{alg2:core} is well defined.
\end{lemma}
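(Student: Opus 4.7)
The plan is to trace back where $\texttt{getCore}$ is called on line~\ref{alg2:core} and show that in every execution path that reaches this line, the formula passed to the core extraction together with the assumption literals is indeed unsatisfiable. Since \Call{getCore}{$\psi$, $\lambda$} is only meaningful when $\psi \wedge \lambda$ is unsatisfiable, the claim reduces to showing that $\matrixf \wedge \varphi_A \wedge \varphi_F \wedge \sigma \wedge \tau$ is unsatisfiable whenever \Call{AnalyzeConflict}{$\sigma$} is invoked with counterexample $\sigma$ and current arbiter assignment $\tau$.

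Next I would observe that \Call{AnalyzeConflict}{} is invoked by the main loop precisely when \Call{CheckArbiterAssignment}{$\tau$} returns ($\mathit{modelValid} = \textsc{false}$, $\sigma$). Inspecting that procedure, there are exactly two sources for such a counterexample:
\begin{enumerate}
  \item The SAT call $\mathit{checker}.\Call{solve}{\tau}$ on $\neg\matrixf\wedge\psi_\mathit{Def}\wedge\varphi_F\wedge\varphi_A$ returns SAT, and $\sigma$ is the restriction of the model to $E\cup U$.
  \item The SAT call returned UNSAT, and $\Call{CheckConsistency}{\varphi_A\wedge\varphi_F,\tau}$ reports an inconsistent universal assignment $\sigma\in[U]$.
\end{enumerate}
For case~(2), the contract of \Call{CheckConsistency}{} stated in the comment explicitly tells us that $\varphi_A\wedge\varphi_F\wedge\tau\wedge\sigma$ is unsatisfiable, and adding $\matrixf$ preserves unsatisfiability. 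For case~(1), the returned $\sigma$ is a total assignment of $E\cup U$, which contains $\var(\matrixf)$; since $\sigma$ satisfies $\neg\matrixf$, we have $\matrixf[\sigma] = \textsc{false}$, so already $\matrixf\wedge\sigma$ is unsatisfiable, and the same is true of the augmented formula $\matrixf\wedge\varphi_A\wedge\varphi_F\wedge\sigma\wedge\tau$.

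Combining the two cases, the assumption set $\sigma\wedge\tau$ fails on $\matrixf\wedge\varphi_A\wedge\varphi_F$ in every call to \Call{AnalyzeConflict}{}, so the core extraction on line~\ref{alg2:core} is well defined. The only mildly subtle point, and the one I would emphasise in the write-up, is the distinction between the two meanings of $\sigma$ (a full $E\cup U$-assignment versus a $U$-only assignment); this is harmless because in the first case the unsatisfiability already holds after restricting $\sigma$ to $U$ (as $\matrixf$ is falsified by the full assignment), so the subsequent use of $\sigma_\forall \defeq \restr{\sigma}{U}$ in \Call{AnalyzeConflict}{} is consistent. No real obstacle arises; the proof is essentially a case analysis on the return path of \Call{CheckArbiterAssignment}{}.
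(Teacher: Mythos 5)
Your proof is correct for the lemma as stated, and its overall shape (a case split on which branch of \textsc{CheckArbiterAssignment} produced the counterexample) matches the paper's. The difference is in the validity-check case: you observe that $\sigma$ is a total assignment of $E\cup U\supseteq\var(\matrixf)$ satisfying $\neg\matrixf$, so $\matrixf\wedge\sigma$ is already unsatisfiable and line~\ref{alg2:core} is trivially well defined. The paper instead takes a detour through $\psi_\mathit{Def}$, Lemma~\ref{DefinitionRemoval}, and the freshly introduced arbiter clauses $\varphi_A'$, and the reason is that it is really establishing the stronger fact needed for the \emph{second} core extraction in line~\ref{alg2:arbiterminimize}, where the assumptions are only $\rho_A\wedge\sigma_\forall$ with $\sigma_\forall=\restr{\sigma}{U}$: there the existential part of $\sigma$ is gone, and one must argue that it is recovered from $\sigma_\forall$, $\tau$, and the formula via definitions, forcing clauses, and the new arbiter clauses. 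Your argument is the more elementary one and fully suffices for line~\ref{alg2:core}; the paper's buys the additional statement about line~\ref{alg2:arbiterminimize}.

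One caveat on your closing remark: the parenthetical claim that ``the unsatisfiability already holds after restricting $\sigma$ to $U$ (as $\matrixf$ is falsified by the full assignment)'' is not justified by that reasoning. Once the existential literals of $\sigma$ are dropped, falsification of $\matrixf$ by the full model tells you nothing about $\matrixf\wedge\varphi_A\wedge\varphi_F\wedge\restr{\sigma}{U}\wedge\tau$; to get unsatisfiability under $\sigma_\forall$ alone you need precisely the paper's argument that the existential values are pinned down (forced, defined, or fixed by $\varphi_A'$). Since the lemma only concerns line~\ref{alg2:core}, this does not affect the validity of your proof, but the side claim as written is unsupported.
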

\begin{proof}
We have to show that $\psi\wedge\rho_A\wedge\sigma_\forall$ is indeed unsatisfiable.
There are two cases to consider. First assume that $\varphi_A\wedge\varphi_F$ was inconsistent under $\tau$ and $\sigma_\forall$. The inconsistency means that $\varphi_A\wedge\varphi_F\wedge\tau\wedge\sigma_\forall$ is unsatisfiable. This means that the core extraction is well-defined. In the second case the formula 
$\neg\matrixf\wedge\psi_{\mathit{Def}}\wedge\varphi_F\wedge\varphi_A\wedge\tau$ was satisfied by $\sigma\cup\tau$.
This means that $\matrixf\wedge\psi_{\mathit{Def}}\wedge\varphi_F\wedge\varphi_A\wedge\varphi_A'\wedge\sigma\wedge\restr{\rho}{A}$ is unsatisfiable---where $\varphi_A'$ denotes the arbiter clauses that were introduces in this step. By Lemma~\ref{DefinitionRemoval} the formula $\matrixf\wedge\varphi_F\wedge\varphi_A\wedge\varphi_A'\wedge\sigma\wedge\restr{\rho}{A}$ is unsatisfiable. 
Moreover, we can argue that $\matrixf\wedge\psi_{\mathit{Def}}\wedge\varphi_F\wedge\varphi_A\wedge\varphi_A'\wedge\restr{\sigma}{U}$ is unsatisfiable---the existential assignments in $\sigma$ are either forced or ruled out by the new arbiter clauses. 
Finally, by similar arguments as in previous proofs, also removing $\psi_{\mathit{Def}}$ does preserve unsatisfiability.
This shows that also in this case the core extraction is well-defined.
\end{proof}

\begin{restatable}{proposition}{CEGISPropositionDerivableClauses}
\label{Prop:CEGISDerivableClauses}
For each clause $C$ added to the arbiter solver by Algorithm~\ref{alg:AdvancedAlgorithm}
(line~\ref{alg2:addclause}), a clause $C'\subseteq C$ can be derived from $\Phi$ in $\forall$Exp+Res.
\end{restatable}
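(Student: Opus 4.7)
The plan is to mimic the argument used for Proposition~\ref{Prop:BasicDerivableClauses}, but this time the work has been almost entirely encapsulated in Lemma~\ref{Correspondance_Main_Lemma}, which already accounts for the presence of forcing clauses. The clause that is added on line~\ref{alg2:addclause} is $C = \neg\rho_A$, where $\rho_A$ is the assignment produced by the core extraction on line~\ref{alg2:arbiterminimize} from the formula $\psi = \matrixf\wedge\varphi_A\wedge\varphi_F$ (possibly extended by the new arbiter clauses $\varphi_A'$ introduced on line~\ref{alg2:newarbiters}) under the assumption $\rho_A \cup \sigma_\forall$. The first step is to argue that this core extraction is well-defined, which is precisely the content of the preceding lemma. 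In particular, we obtain that $\matrixf\wedge\varphi_A\wedge\varphi_F\wedge\rho_A\wedge\sigma_\forall$ is unsatisfiable.

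Next, I would identify each arbiter variable $e^{\sigma}$ introduced by the algorithm with the annotated literal $e^{\sigma}$ of $\forall$Exp+Res, so that the partial assignment $\rho_A$ on arbiter variables is read as a term on annotated literals. The arbiter literals in $\rho_A$ whose annotation is inconsistent with $\sigma_\forall$ do not contribute to the unsatisfiability: the only arbiter clauses involving such a literal have $\neg\sigma$ among their disjuncts, which is satisfied by $\sigma_\forall$. Hence $\matrixf\wedge\varphi_A\wedge\varphi_F\wedge\rho_A^{\sigma_\forall}\wedge\sigma_\forall$ is still unsatisfiable. Now Lemma~\ref{Correspondance_Main_Lemma} applies with $\tau\defeq\rho_A$ and $\sigma\defeq\sigma_\forall$, yielding a subclause of $\neg\rho_A^{\sigma_\forall}$ derivable from $\FFF$ in $\forall$Exp+Res. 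Since every literal in $\rho_A^{\sigma_\forall}$ is already a literal of $\rho_A$, this subclause is also a subclause of $\neg\rho_A = C$, as required.

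One subtlety I expect to be the main obstacle is carefully accounting for the arbiter variables $A'$ freshly created on line~\ref{alg2:newarbiters} before the minimization step: they have to be treated on the same footing as the pre-existing arbiter variables $A$, and the corresponding arbiter clauses $\varphi_A'$ must be absorbed into $\varphi_A$ so that Lemma~\ref{Correspondance_Main_Lemma} can be applied without modification. A second subtlety is that $\rho_A$ on line~\ref{alg2:arbiterminimize} was built up from several sources (the initial core, the reasons for forced literals in the for-loop, and the extension by $\rho_A'$ on line~\ref{alg2:arbiterassignmentextension}); however, all of this only matters insofar as the final call to \Call{getCore}{} certifies the unsatisfiability of $\psi\wedge\rho_A\wedge\sigma_\forall$, which is exactly the hypothesis needed by Lemma~\ref{Correspondance_Main_Lemma}. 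Once that hypothesis is in place, the proposition follows uniformly, regardless of which branch of \Call{AnalyzeConflict}{} was taken to assemble $\rho_A$.
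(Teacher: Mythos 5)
Your proposal is correct and follows essentially the same route as the paper's own proof: identify the added clause as $\neg\rho_A$, note that the preceding well-definedness lemma guarantees $\matrixf\wedge\varphi_A\wedge\varphi_F\wedge\rho_A\wedge\sigma_\forall$ is unsatisfiable, and invoke Lemma~\ref{Correspondance_Main_Lemma} to obtain a subclause of $\neg\rho_A^{\sigma_\forall}\subseteq\neg\rho_A$. Your additional remarks (absorbing the fresh arbiter clauses $\varphi_A'$ into $\varphi_A$, and the observation that literals with annotations inconsistent with $\sigma_\forall$ are dispensable) are sound and merely make explicit details the paper leaves implicit.
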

\begin{proof}
Let $\neg\rho_A$ be a clause that is added to the arbiter solver. Then we know that $\matrixf\wedge\matrixf_A\wedge\matrixf_F\wedge\rho_A\wedge\sigma$ for some universal assignment $\sigma$ is unsatisfiable. By applying Lemma~\ref{Correspondance_Main_Lemma} we can conclude that we can derive a subset of $\neg\rho_A$.
\end{proof}
%
\begin{restatable}{theorem}{CEGISCompletenessTheorem}
\label{CAGIS:com}
If Algorithm~\ref{alg:AdvancedAlgorithm} returns \textsc{false} for a DQBF $\Phi$ then $\Phi$ is false.
\end{restatable}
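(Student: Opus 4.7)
The plan is to follow the structure of the proof of Theorem~\ref{basicalg:com} almost verbatim, leveraging Proposition~\ref{Prop:CEGISDerivableClauses} as the engine that transports unsatisfiability from the arbiter solver into $\forall$Exp+Res. Algorithm~\ref{alg:AdvancedAlgorithm} returns \textsc{false} only when \textsc{FindNewArbiterAssignment} fails, which means the set $\mathcal{C}$ of clauses that has accumulated in $\mathit{arbiterSolver}$ is propositionally unsatisfiable.

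First I would invoke Proposition~\ref{Prop:CEGISDerivableClauses}: for every $C \in \mathcal{C}$ there exists a subclause $C' \subseteq C$ that is derivable from $\FFF$ in $\forall$Exp+Res. Since $C'$ subsumes $C$, the set $\mathcal{C}' \defeq \{ C' \mid C \in \mathcal{C}\}$ entails $\mathcal{C}$ and is therefore itself unsatisfiable. The variables appearing in these clauses are arbiter variables $e^\sigma$, which are nothing but the annotated existential literals manipulated by $\forall$Exp+Res, so the propositional semantics underlying the arbiter solver and the resolution rule of $\forall$Exp+Res coincide.

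Next, by refutational completeness of propositional resolution, $\mathcal{C}'$ admits a resolution refutation. Splicing the individual $\forall$Exp+Res derivations of the clauses in $\mathcal{C}'$ in front of that resolution refutation yields a complete $\forall$Exp+Res refutation of $\FFF$: every leaf is produced by a legitimate axiom step of the calculus, and every internal step is either a propositional resolution step inherited from one of those derivations or a step of the resolution refutation of $\mathcal{C}'$. Soundness of $\forall$Exp+Res~\cite{BeyersdorffBCSS19} then delivers the conclusion that $\FFF$ is false.

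The real work happens upstream, in Proposition~\ref{Prop:CEGISDerivableClauses} and ultimately in Lemma~\ref{Correspondance_Main_Lemma}, which must cope with the interaction of forcing clauses; the theorem itself is a packaging step. The only point worth verifying is that each clause added to $\mathit{arbiterSolver}$ is indeed a clause over arbiter literals in the $\forall$Exp+Res sense, which holds by construction since line~\ref{alg2:addclause} only ever inserts clauses of the form $\neg\rho_A$ with $\rho_A \in [A]$.
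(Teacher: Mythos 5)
Your proof is correct and follows essentially the same route as the paper's: both reduce to Proposition~\ref{Prop:CEGISDerivableClauses}, observe that the derivable subclauses subsume the unsatisfiable clause set in the arbiter solver, and then appeal to refutational completeness of resolution plus soundness of $\forall$Exp+Res. Your added remarks on splicing the derivations and on arbiter variables being annotated literals just make explicit what the paper leaves implicit.
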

\begin{proof}
If the algorithm returns \textsc{false} then the clauses in the $\mathit{arbiterSolver}$ are unsatisfiable.
Subsequently, we denote those clauses by $\mathcal{C}$.
By Proposition~\ref{Prop:CEGISDerivableClauses}, we know that we can derive for each clause $C\in\mathcal{C}$ a subclause. This means that those derivable clauses are unsatisfiable. As propositional resolution is refutationally complete we can derive the empty clause from those derivable clauses. 
This means that there is a $\forall$Exp+Res proof for $\Phi$. As $\forall$Exp+Res is sound, this means that $\Phi$ is false.
\end{proof}
%
Each iteration of Algorithm~\ref{alg:AdvancedAlgorithm} introduces new forcing clauses or forbids another arbiter assignment. Because there is a bound on the number of arbiter variables that can be introduced, the number of such clauses can be bounded as well, and the algorithm eventually terminates.
Together with Theorem~\ref{CAGIS:cor} and Theorem~\ref{CAGIS:com}, this gives rise to the following corollary.
\begin{corollary}
Algorithm~\ref{alg:AdvancedAlgorithm} is a decision procedure for DQBF.
\end{corollary}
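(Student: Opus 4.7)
The plan is to combine the two soundness theorems established immediately above with a termination argument. Partial correctness is already in hand: Theorem~\ref{CAGIS:cor} guarantees that a {\sc true} answer is justified, and Theorem~\ref{CAGIS:com} guarantees the same for a {\sc false} answer. So the only obligation I need to discharge for the corollary is that Algorithm~\ref{alg:AdvancedAlgorithm} halts on every input DQBF $\FFF$.

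For termination, I would first bound the set of objects that can ever be produced during a run. Arbiter variables have the form $e_i^{\sigma}$ with $\sigma \in [D_i]$, so $|A|$ is bounded by $\sum_{i=1}^m 2^{|D_i|}$ throughout. Consequently, the number of possible arbiter assignments $\tau \in [A]$ and the number of possible forcing clauses (which are clauses over $U \cup A$) are both finite. The number of clauses that can ever be added to $\mathit{arbiterSolver}$ (line~\ref{alg2:addclause}) is therefore finite, and likewise for $\varphi_F$.

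The second step is to argue that each iteration of the main loop makes measurable progress, in the sense that it either halts or strictly enlarges one of these bounded sets. If \Call{CheckArbiterAssignment}{} returns {\sc true}, the algorithm halts. Otherwise, \Call{AnalyzeConflict}{} is invoked: it returns {\sc true} only after setting $\mathit{oppositeForced}$, which happens exactly when a new clause has been appended to $\varphi_F$; it returns {\sc false} only after appending a new clause to $\mathit{arbiterSolver}$ in line~\ref{alg2:addclause}, after which \Call{FindNewArbiterAssignment}{} either fails (halting the algorithm) or produces a fresh $\tau$ satisfying all clauses so far. Since neither $\varphi_F$ nor the clause database of $\mathit{arbiterSolver}$ can grow indefinitely, the loop can execute only finitely often.

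The one delicate step, which I would treat as the main obstacle, is to ensure that the \emph{same} forcing clause is never added twice and the \emph{same} arbiter clause is never added twice — otherwise the growth arguments above do not yield strict progress. For arbiter clauses this is easy: the new $\neg \rho_A$ rules out the current $\tau$ (as $\rho_A$ agrees with $\tau$ on $A$ by construction in lines~\ref{alg2:arbiterassignmentextension}--\ref{alg2:arbiterminimize}), so the subsequent $\tau$ chosen by $\mathit{arbiterSolver}$ is different. For forcing clauses one must verify that when $\mathit{oppositeForced}$ is set, the clause emitted by \Call{getForcingClause}{} is not already present in $\varphi_F$; this follows because the entailment test $\psi \land \sigma_\forall \land \tau \models \neg \ell$ on line~\ref{alg2:forbegin}ff.\ was derived from the counterexample $\sigma$ returned by \Call{CheckArbiterAssignment}{}, and that counterexample would have been blocked by any previously added forcing clause encoding the same implication. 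Once this is checked, the corollary follows immediately by combining the bounded-progress termination argument with Theorems~\ref{CAGIS:cor} and~\ref{CAGIS:com}.
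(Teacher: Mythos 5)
Your proposal is correct and follows essentially the same route as the paper: combine Theorems~\ref{CAGIS:cor} and~\ref{CAGIS:com} for partial correctness with a termination argument based on the bounded number of arbiter variables, hence of arbiter assignments and forcing clauses. The paper's own argument is only a two-sentence sketch; your additional care about why no forcing clause or arbiter clause is ever added twice (fresh arbiter variables, or the current $\tau$ falsifying the new clause, and previously added forcing clauses blocking the counterexample that would regenerate them) is a sound filling-in of a detail the paper leaves implicit.
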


\section{Experiments}
\label{sec:experiments}
We implemented Algorithm~\ref{alg:AdvancedAlgorithm} as described in the previous section in a prototype named \tool.\footnote{Available at \url{https://github.com/perebor/pedant-solver}.}
For definition extraction, it uses a subroutine from {\sc Unique}~\cite{Slivovsky20} that in turn relies on an interpolating version of {\sc MiniSat} \cite{EenS03} bundled with the {\sc ExtAvy} model checker~\cite{GurfinkelV14,VizelGM15}.
Further, {\sc CaDiCaL} is used as a SAT solver~\cite{BiereFazekasFleuryHeisinger-SAT-Competition-2020-solvers} (we also tested with {\sc CryptoMiniSAT}~\cite{SoosNC09} and {\sc Glucose}~\cite{AudemardS09} but saw no significant differences in overall performance).
\tool\ can read DQBF in the standard DQDIMACS format and output models in the DIMACS format.

The implementation incorporates a few techniques not explicitly mentioned in the
above pseudocode.
We identify \emph{unate} existential literals (a generalization of pure literals)~\cite{AkshayCGKS18}, which can be used in any model of a DQBF.
Moreover, we set a (configurable) default value for existential variables that applies when there is no forcing clause propagating a different value.
This is to limit the freedom of the SAT solver used in the validity check in coming up with counterexamples.
Moreover, when checking for definability of an existential variable, we use \emph{extended dependencies} that include existential variables with dependency sets that are contained in the dependencies of the variable that is checked.

For all experiments described below we use a cluster with Intel Xeon E5649 processors at 2.53~GHz running 64-bit Linux.

\subsection{Performance on Standard Benchmark Sets}
We compare \tool\ with other DQBF solvers on standard benchmark sets
in terms of instances solved within the timeout and their PAR2 score.\footnote{The Penalized Average Runtime (PAR) is the average runtime, with the time for each unsolved instance calculated as a constant multiple of the timeout.}
Specifically, we choose the solvers {\sc dCAQE}~\cite{TentrupR19}, {\sc iDQ}~\cite{FrohlichKBV14}, {\sc HQS}~\cite{GitinaWRSSB15}, and the recently introduced {\sc DQBDD}~\cite{Sic20}.
Both {\sc HQS} and {\sc DQBDD} internally use {\sc HQSPre}~\cite{WimmerSB19} as a preprocessor.
For {\sc dCAQE} and {\sc iDQ}, we call {\sc HQSPre} with a time limit of $300$ seconds (the time for preprocessing is included in the total running time).
By default, \tool\ is run without preprocessing.

The results are based on a single run with a time and memory limit of $1800$ seconds and $8$~GB, respectively, which are enforced using {\sc RunSolver}~\cite{Roussel11}.\footnote{Due to the heavy-tailed runtime distribution of DQBF solvers, run-to-run variance rarely affects the number of solved instances. However, PAR2 scores should be taken with a grain of salt and only used to compare orders of magnitude.}
We report results for two benchmark sets. The first---which we refer to as the ``Compound'' set---has been used in recent papers on {\sc HQS}~\cite{Ge-ErnstSW19}. 
It is comprised of instances encoding partial equivalence checking (PEC)~\cite{SchollB01,FrohlichKBV14,GitinaRSWSB13,FinkbeinerT14} and controller synthesis~\cite{BloemKS14}, as well as succinct DQBF representations of propositional satisfiability~\cite{BalabanovCJ14}.
Results are summarized in Table~\ref{tab:compound}.
\tool\ solved the most instances overall and for 4 out of 5 families (the ``Balabanov'' family being the exception), with {\sc DQBDD} coming in a close second.
The performance of \tool\ on the PEC instances in the ``Finkbeiner'' family is particularly encouraging.
\newcommand{\cdott}{\!\cdot\!} 
\begin{table}
\caption{\label{tab:compound} Results for the ``Compound'' benchmark set.}
\noindent\resizebox{\linewidth}{!}{%
\begin{tabular}[t]{@{}l@{}r@{/}lr@{/}lr@{/}lr@{/}lr@{/}l@{}}
\toprule
 & \multicolumn{2}{c}{\sc dCAQE} & \multicolumn{2}{c}{DQBDD} &
                                                           \multicolumn{2}{c}{HQS}
  & \multicolumn{2}{c}{\sc iDQ} & \multicolumn{2}{c}{\sc Pedant}  \\
\cmidrule(l{3pt}r{3pt}){2-3} \cmidrule(l{3pt}r{3pt}){4-5} \cmidrule(l{3pt}r{3pt}){6-7} \cmidrule(l{3pt}r{3pt}){8-9} \cmidrule(l{3pt}r{3pt}){10-11} 
Family(Total) & Sol & PAR2 & Sol & PAR2 & Sol & PAR2 & Sol & PAR2 & Sol & PAR2 \\
\midrule
Balabanov(34) & \textbf{21} & $1.5\cdott 10^{3}$ & 13 & $2.3\cdott 10^{3}$ & 19 & $1.8\cdott 10^{3}$ & \textbf{21} & $1.5\cdott 10^{3}$ & 13 & $2.3\cdott 10^{3}$\\
Biere(1200) & \textbf{1200} & $1.6\cdott 10^{-1}$ & 1197 & $9.0\cdott 10^{0}$ & \textbf{1200} & $6.4\cdott 10^{-2}$ & 1184 & $6.6\cdott 10^{1}$ & \textbf{1200} & $1.0\cdott 10^{-1}$\\
Bloem(461) & 85 & $2.9\cdott 10^{3}$ & 82 & $3.0\cdott 10^{3}$ & 82 & $3.0\cdott 10^{3}$ & 50 & $3.2\cdott 10^{3}$ & \textbf{98} & $2.9\cdott 10^{3}$\\
Finkbeiner(2000) & 32 & $3.5\cdott 10^{3}$ & 1999 & $1.1\cdott 10^{1}$ & 1799 & $3.9\cdott 10^{2}$ & 6 & $3.6\cdott 10^{3}$ & \textbf{2000} & $1.7\cdott 10^{0}$\\
Scholl(1116) & 568 & $1.8\cdott 10^{3}$ & 793 & $1.1\cdott 10^{3}$ & 676 & $1.4\cdott 10^{3}$ & 345 & $2.5\cdott 10^{3}$ & \textbf{854} & $8.7\cdott 10^{2}$\\
\midrule
All(4811) & 1906 & $2.2\cdott 10^{3}$ & 4084 & $5.5\cdott 10^{2}$ & 3776 & $7.9\cdott 10^{2}$ & 1606 & $2.4\cdott 10^{3}$ & \textbf{4165} & $4.9\cdott 10^{2}$\\
\bottomrule
\end{tabular}}
\end{table}

Next, we consider the instances from the DQBF track of QBFEVAL'20~\cite{PulinaS19}. 
Results are shown in Table~\ref{tab:eval20}. Here, \tool\ falls behind the other solvers, with the exception of {\sc iDQ}.
In particular, significantly fewer instances from the ``Kullmann'' and ``Tentrup'' families are solved.

For the autarky finding benchmarks in the ``Kullmann'' family~\cite{KullmannS19}, we noticed that most dependencies can be removed by preprocessing with the reflexive resolution-path dependency scheme~\cite{SlivovskyS16,WimmerSWB16}.
The resulting instances are much easier to solve for \tool, and models can still be validated against the original DQBFs.
In general, we found that preprocessing with {\sc HQSPre} can have both positive and negative effects on \tool.
The rightmost columns of Table~\ref{tab:eval20} show results when preprocessing is enabled.\footnote{With options \texttt{--resolution 1 --univ\_exp 0 --substitute 0}.}
Overall, performance is clearly improved, but fewer instances from the ``Bloem'' and ``Scholl'' families are solved.
In prior work, it was observed that preprocessing can destroy definitions~\cite{Slivovsky20}, and this appears to be the case here as well.

For the instances from the ``Tentrup'' family, we discovered that the performance of \tool\ is sensitive to which counterexamples are generated by {\sc CaDiCaL}.
With the right sequence of counterexamples, false instances can be refuted quickly, while otherwise the solver is busy introducing arbiter variables for minor variations of previously encountered cases.
Curiously, this also appears to be the case for true instances.
We believe that the algorithm can be made more robust against such
``adversarial'' sequences of counterexamples by achieving better generalization (see Section~\ref{sec:conclusion}).

\begin{table}
\caption{\label{tab:eval20} Results for the QBFEVAL'20 DQBF benchmark set.}
\noindent\resizebox{\linewidth}{!}{%
\begin{tabular}[t]{@{}l@{}r@{/}lr@{/}lr@{/}lr@{/}lr@{/}lr@{/}l@{}}
\toprule
 & \multicolumn{2}{c}{\sc dCAQE} & \multicolumn{2}{c}{DQBDD} & \multicolumn{2}{c}{HQS}
  & \multicolumn{2}{c}{\sc iDQ} & \multicolumn{2}{c}{\sc Pedant} & \multicolumn{2}{c}{\sc PedantHQ} \\
\cmidrule(l{3pt}r{3pt}){2-3} \cmidrule(l{3pt}r{3pt}){4-5} \cmidrule(l{3pt}r{3pt}){6-7} \cmidrule(l{3pt}r{3pt}){8-9} \cmidrule(l{3pt}r{3pt}){10-11} \cmidrule(l{3pt}r{3pt}){12-13}
Family(Total) & Sol & PAR2 & Sol & PAR2 & Sol & PAR2 & Sol & PAR2 & Sol & PAR2 & Sol & PAR2\\
\midrule
Balabanov(34) & \textbf{21} & $1.5\cdott 10^{3}$ & 13 & $2.3\cdott 10^{3}$ & 19 & $1.8\cdott 10^{3}$ & \textbf{21} & $1.5\cdott 10^{3}$ & 14 & $2.3\cdott 10^{3}$ & 13 & $2.4\cdott 10^{3}$\\
Bloem(90) & 31 & $2.4\cdott 10^{3}$ & 32 & $2.3\cdott 10^{3}$ & 33 & $2.3\cdott 10^{3}$ & 14 & $3.1\cdott 10^{3}$ & \textbf{37} & $2.2\cdott 10^{3}$ & 25 & $2.7\cdott 10^{3}$\\
Kullmann(50) & 35 & $1.1\cdott 10^{3}$ & \textbf{50} & $1.5\cdott 10^{1}$ & 41 & $6.9\cdott 10^{2}$ & \textbf{50} & $3.4\cdott 10^{0}$ & 34 & $1.3\cdott 10^{3}$ & 40 & $7.3\cdott 10^{2}$\\
Scholl(90) & 52 & $1.5\cdott 10^{3}$ & 78 & $4.9\cdott 10^{2}$ & 77 & $5.3\cdott 10^{2}$ & 15 & $3.0\cdott 10^{3}$ & \textbf{82} & $3.3\cdott 10^{2}$ & 65 & $1.2\cdott 10^{3}$\\
Tentrup(90) & 77 & $5.5\cdott 10^{2}$ & \textbf{84} & $2.8\cdott 10^{2}$ & 78 & $5.1\cdott 10^{2}$ & 17 & $2.9\cdott 10^{3}$ & 15 & $3.0\cdott 10^{3}$ & \textbf{84} & $2.9\cdott 10^{2}$\\
\midrule
All(354) & 216 & $1.4\cdott 10^{3}$ & \textbf{257} & $1.0\cdott 10^{3}$ & 248 & $1.1\cdott 10^{3}$ & 117 & $2.4\cdott 10^{3}$ & 182 & $1.8\cdott 10^{3}$ & 227 & $1.4\cdott 10^{3}$\\
\bottomrule
\end{tabular}}
\end{table}

\subsection{Distribution of Defined Existential Variables}
The main design goal for \tool\ was to create a solver that benefits from unique Skolem functions given by propositional definitions. We thus expect \tool\ to do well on instances where a large proportion of existential variables is defined.
Figure~\ref{fig:unique} shows the distribution of defined existential variables (i.e., unique Skolem functions) as computed by {\sc Unique}~\cite{Slivovsky20}.
\begin{figure}
  \centering
  \includegraphics[width=\linewidth]{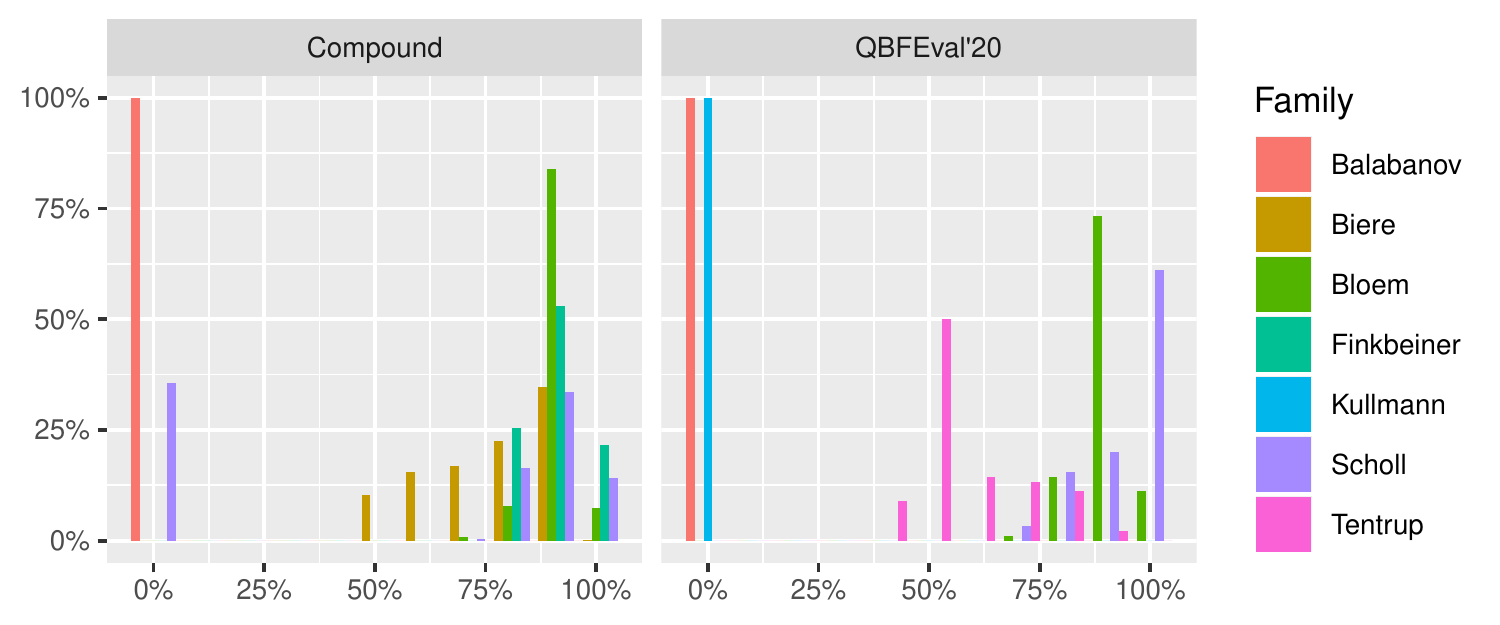}
  \caption{Distribution of defined variables by benchmark set and family. For a given percentage $x_0$ on the x-axis, the y-axis shows the fraction of instances from each benchmark family for which $x_0$ percent of existential variables are defined. For example, the instances in the ``Balabanov'' family have no defined variables, while the fraction of defined variables for instances in the ``Finkbeiner'' family ranges from $75$\% to $100$\%.}\label{fig:unique}
\end{figure}
These definitions are also found by \tool\ without the introduction of arbiter variables.
Comparing Table~\ref{tab:compound} and Table~\ref{tab:eval20} with Figure~\ref{fig:unique}, we see that \tool\ performed better for instance families with a larger fraction of defined variables.
This makes sense: the fewer variables are undefined, the fewer arbiter variables need to be introduced.

\subsection{Solution Validation}
When running \tool\ without preprocessing (the default), we had it trace and output models in DIMACS format.
We implemented a simple workflow for validating these models in Python~3 using the {\sc PySAT} library~\cite{imms-sat18}.
First, a simple syntactic check is performed to make sure the encoding of each Skolem function only mentions variables in the dependency set of the corresponding variable.
Then, a SAT solver is used to verify that substituting the model~$\psi$ for existential variables in the matrix~$\varphi$ of the input DQBF is valid, by testing for each clause~$C \in \varphi$ whether $\psi \land \neg C$ is unsatisfiable~(cf. Lemma~\ref{TrueDQBFAux}).
In this manner, we are able to validate models for all $648$ true DQBFs in the two benchmark sets that were solved by \tool\ without preprocessing. The maximum validation time was $237$ seconds, with a mean of $4.3$~seconds and a median of $0.5$ seconds.

The current validation process is intended as a proof of concept. Since models constructed by \tool\ are circuits, we plan to support the AIGER format~\cite{Biere-FMV-TR-11-2} in the near future, and provide a workflow along the lines of {\sc QBFCert}~\cite{NiemetzPLSB12}.


\section{Related Work}
\label{sec:related}
%
The DQDPLL algorithm lifts the CDCL algorithm to
DQBF~\cite{FrohlichKB12}. While CDCL solvers are free to assign
variables in any order, in DQBF a variable may be assigned only after
the variables in its dependency set have been assigned. 
Moreover, its assignment must not differ between branches in the search tree that agree on the assignment of the dependency set.
In DQDPLL, this is enforced by temporary \emph{Skolem clauses} that fix the truth value of a variable for a given assignment of its dependencies.
The solver {\sc dCAQE} lifts clausal abstraction from QBF to DQBF~\cite{TentrupR19}.
QBF solvers based on abstraction maintain a propositional formula for each quantifier level that characterizes eligible moves in the evaluation game.
These \emph{abstractions} are refined by forbidding moves that are known to result in a loss.
Abstractions are linked to each other through auxiliary variables that indicate which clauses are satisfied at different levels.
{\sc dCAQE} organizes variables in a dependency lattice that determines the order in which their abstractions may be solved.
This can lead to variables being assigned after variables that do not appear in their dependency sets, and additional consistency checks have to be applied to ensure that Skolem functions do not exploit such spurious dependencies. {\sc dCAQE} uses \emph{fork resolution} as its underlying proof system~\cite{Rabe17}.

Expansion of universal variables can be successively applied to transform a DQBF into a propositional formula that can be passed to a SAT solver~\cite{BubeckB06}.
In practice, the space requirements of fully expanding a DQBF are prohibitive.
This can be addressed by only expanding some universal variables, as well as considering only a subset of the clauses generated by expansion. Even though such approaches degenerate into full expansion in the worst case, they can be quite effective.
The solver {\sc iDQ}~\cite{FrohlichKBV14} successively expands a DQBF in a counterexample-guided abstraction refinement (CEGAR) loop.
Initially, universal variables in each clause are expanded separately.
Satisfiability of the resulting propositional formula is checked by a SAT solver.
If it is unsatisfiable, so is the original DQBF.
Otherwise, {\sc iDQ} checks whether any pair of literals with consistent annotations are assigned different truth values in the satisfying assignment. If there are no such literals, a model of the DQBF has been found.
Otherwise, clauses containing the corresponding clashing literals are further expanded. The system is inspired by the \emph{Inst-Gen} calculus, the proof system underpinning the First-Order solver {\sc iProver}~\cite{Korovin08}.
Originally designed for the effectively propositional fragment of first-order logic (EPR), {\sc iProver} also accepts DQBF as input.

The solver {\sc HQS} seeks to keep the memory requirements of expansion in check by operating on And-Inverter Graph (AIG) representations of input formulas~\cite{GitinaWRSSB15}.
It uses expansion alongside several other techniques to transform a DQBF into an equivalent QBF and leverage advances in QBF solving~\cite{WimmerKBS17,Ge-ErnstSW19}.
{\sc HQS} is paired with a powerful preprocessor named {\sc HQSPre} that provides an arsenal of additional simplification techniques~\cite{WimmerSB19}, including an incomplete but efficient method for refuting DQBF by reduction to a QBF encoding~\cite{FinkbeinerT14}.
{\sc HQSpre} is also used in the recently developed solver {\sc DQBDD}~\cite{Sic20}, which is similar to {\sc HQS} but relies on Binary Decision Diagrams (BDDs) instead of AIGs to represent formulas and perform quantifier elimination.

Evaluating DQBF is NEXPTIME complete~\cite{AzharPR01} in general, but some tractable subclasses have been identified in recent work~\cite{SchollJWG19,GanianPSS20}.


\section{Conclusion}
\label{sec:conclusion}
We presented a decision algorithm for DQBF that relies on definition
extraction to compute Skolem functions inside a CEGIS loop, and
evaluated it in terms of the prototype implementation \tool.
While the initial results are very promising, we see significant room for improvement and various directions to pursue in future research.
Generally, the approach works well when Skolem functions can be computed by definition extraction for a large fraction of existential variables without introducing too many arbiter variables.
During testing, we encountered multiple instances for which conflict analysis was occupied dealing with minor variations of a small number of counterexamples.
We believe that this is partly due to arbiter variables being introduced for \emph{complete} assignments of dependency sets.
Even if the assignment of some universal variables in the dependency set is irrelevant for a given counterexample, the newly introduced arbiter variables only deal with the counterexample as represented by the complete assignment, and each counterexample obtained by varying the assignment of irrelevant universal variables requires a new set of arbiter variables.
To avoid this, we plan to experiment with a variant of the algorithm that introduces arbiter variables for \emph{partial} assignments~\cite{Korovin08,FrohlichKBV14}.

A different approach to generalizing from counterexamples---one that does not require changes in the underlying proof system---is the use of machine learning.
By predicting the pattern common to a sequence of counterexamples, it is possible to deal with it wholesale and avoid an exhaustive enumeration~\cite{Janota18}.
Moreover, recent work on Boolean Synthesis demonstrates the viability of learning Skolem functions by sampling satisfying assignments~\cite{GoliaRM20}.

Finally, we plan to explore further applications of interpolation-based definition extraction within our algorithm.
Currently, its use is limited to existential variables that are defined by their dependency sets in the input DQBF, or are undefined only in a small number of cases.
In addition to that, one could search for ``partial'' definitions
under assignments of the dependency set characterized by formulas, or
introduce definitions that are valid under assumptions~\cite{RabeS16}.

%
\subsubsection*{Acknowledgements}
Supported by the Vienna Science and Technology Fund (WWTF) under the grants ICT19-060 and ICT19-065, and the Austrian Science Fund (FWF) under grant W1255.

\bibliographystyle{splncs04}
\bibliography{SAT2021}


\end{document}